\newtheorem{condition}{Condition}
\newtheorem{definition}{Definition}
\newtheorem{lemma}{Lemma}
\newtheorem{theorem}{Theorem}
\newcommand{\fbf}[1]{\mathbf{#1}}
\newcommand{\bE}{\fbf{E}}
\newcommand{\cB}{\mathcal{B}}
\newcommand{\cF}{\mathcal{F}}
\newcommand{\cN}{\mathcal{N}}
\newcommand{\cX}{\mathcal{X}}
\newcommand{\dd}{\mathop{}\!\mathrm{d}}
\newcommand{\EE}{\mathbf{E}}
\newcommand{\PP}{\mathbf{P}}
\newcommand{\rad}{\operatorname{Rad}_n}
\newcommand{\erad}{\widehat{\operatorname{Rad}}}
\def\iid{{\em i.i.d.~}} 
\newcommand{\wrt}{with respect to }
\newcommand{\var}{\mathbf{Var}}
\newcommand{\cov}{\mathbf{Cov}}
\newcommand{\sumin}{\sum_{i=1}^n}
\newcommand{\ignore}[1]{}
\title{A Generalized Fréchet Test for Object Data with Unequal Repeated Measurements}
\author[1]{Jingru Zhang\footnote{jr\_zhang@fudan.edu.cn}}
\author[2]{Shengjie Zhang}
\author[3]{Christopher W Jones}
\author[3]{Mathias Basner}
\author[4]{Haochang Shou}
\affil[1]{School of Data Science, Fudan University}
\affil[2]{Institute of Science and Technology for Brain-inspired Intelligence, Fudan University}
\affil[3]{Department of Psychiatry, University of Pennsylvania}
\affil[4]{Department of Biostatistics, Epidemiology and Informatics, University of Pennsylvania}
\date{}
\begin{document}
\maketitle

\begin{abstract}
Advancements in data collection have led to increasingly common repeated observations with complex structures in biomedical studies. 
Treating these observations as random objects, rather than summarizing features as vectors, avoids feature extraction and better reflects the data's nature. 
Examples include repeatedly measured activity intensity distributions in physical activity analysis and brain networks in neuroimaging. 
Testing whether these repeated random objects differ across groups is fundamentally important; however, traditional statistical tests often face challenges due to the non-Euclidean nature of metric spaces, dependencies from repeated measurements, and the unequal number of repeated measures.
By defining within-subject variability using pairwise distances between repeated measures and extending Fréchet analysis of variance, we develop a generalized Fréchet test for exchangeable repeated random objects, applicable to general metric space-valued data with unequal numbers of repeated measures. 
The proposed test can simultaneously detect differences in location, scale, and within-subject variability.
We derive the asymptotic distribution of the test statistic, which follows a weighted chi-squared distribution. Simulations demonstrate that the proposed test performs well across different types of random objects. We illustrate its effectiveness through applications to physical activity data and resting-state functional magnetic resonance imaging data.
\end{abstract}

Key words:
Imbalanced repeated measurements;
Neuroimaging;
Nonparametric test;
Non-Euclidean data;
Physical activity;
Within-subject variability

\section{Introduction}
Repeated measures are commonly used to capture within-subject variability and improve data reliability.
With the growing complexity of biomedical research and advancements in data collection technologies, it is increasingly common to encounter repeated random object data with unequal numbers of measurements in non-Euclidean metric spaces.
For instance, in the Sleep and Altertness substudy of the Individualized Comparative Effectiveness of Models Optimizing Patient Safety and Resident Education (iCOMPARE) trial, medical interns were randomly assigned to either flexible duty-hour programs or standard duty-hour programs \citep{basner2011validity,basner2019sleep}.
They were instructed to wear an actigraph device that continuously tracked their daily physical activities and complete a brief smartphone survey each morning. 
The survey included a sleep log, a sleep quality score, the Karolinska Sleepiness Scale score, and a brief psychomotor vigilance test. 
In this case, each individual had two sets of outcomes each day: minute-by-minute 24-hour activity profiles and sleep-related measures. While these sleep-related outcomes are usually summarized as a Euclidean vector, the daily activity profiles are multivariate objects that follow complicated data distributions and cannot be appropriately described and modeled as a Euclidean vector.
Instead summarizing the activity data into a few lower dimensional statistics that might be sensitive to the choices of processing algorithms and cutpoints, recent attention has shifted toward using directly modeling the continuous distribution of daily activity intensities \citep{keadle2014impact,schrack2016assessing,yang2020quantlet,zhang2024mediation}, as it more accurately reflects the overall activity level. 
As an illustration, Figure \ref{fig:example-icompare}A shows the minute-by-minute activity counts for one participant across three days in the iCOMPARE trial, and Figure \ref{fig:example-icompare}B presents corresponding histograms of the log-transformed activity counts. 
Figure \ref{fig:example-icompare}C displays bar plots of numbers of repeated measures (days) for participants in the standard and flexible duty-hour groups.
The number of repeated measures varies across subjects, and the distribution of the number of repeated measures is notably different between the two groups, indicating a clear imbalance in measurement frequency.
%Hence, each individual has repeated measures of daily activity distributions and daily sleep-related vectors. 
Similar challenges arise in the neuroimaging field, where assessing test-retest reliability and reproducibility for functional and structural connectomics is crucial \citep{anderson2011reproducibility,zuo2014test,noble2017influences}.
To address this, the Consortium for Reliability and Reproducibility (CoRR) has aggregated previously collected test-retest imaging datasets from more than 36 laboratories around the
world \citep{zuo2014open}. 
These laboratories recruited participants to undergo repeated imaging scans within a few hours.
Particularly, resting-state functional magnetic resonance imaging (rs-fMRI) is widely recognized as an imaging modality with a low signal-to-noise ratio. Hence, it has become increasingly common for research labs to acquire test-retest rs-fMRI data, resulting in two or more brain connectivity networks for each subject.
For instance, Figure \ref{fig:example-neuro}A illustrates brain functional connectivity networks from two scans of a randomly selected individual, with the corresponding graph Laplacians shown in Figure \ref{fig:example-neuro}B. Figure \ref{fig:example-neuro}C presents bar plots depicting numbers of repeated measures (scans) for young and middle-aged participants from a study conducted at New York University Langone Medical Center, as aggregated by CoRR.
Although most subjects underwent two scans, the number of repeated measures differs among individuals, clearly indicating imbalanced repeated measurements.
Analyzing these repeated objects requires robust statistical methods capable of handling non-Euclidean data and accommodating unequal repeated measurements.
\begin{figure}[!h]
	\centering
	\includegraphics[width=\textwidth]{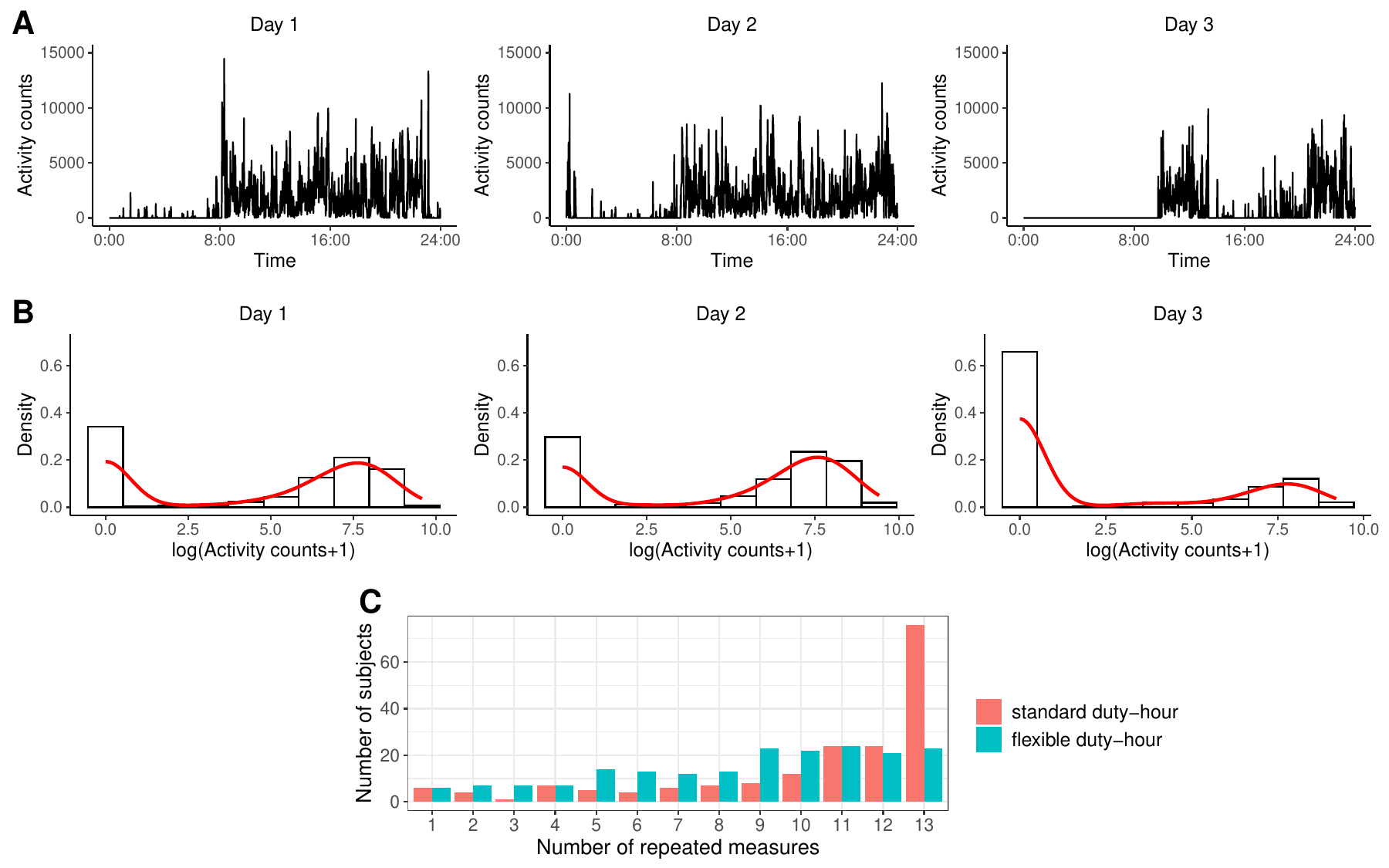}
	\caption{
	(A) Trends of activity counts for a randomly selected individual over three days.
(B) Corresponding histograms and density plots of log-transformed activity counts.
(C) Bar plots illustrating numbers of repeated measures for participants in the standard duty-hour group and the flexible duty-hour group.
	}\label{fig:example-icompare}
\end{figure}
\begin{figure}[!h]
	\centering
	\includegraphics[width=\textwidth]{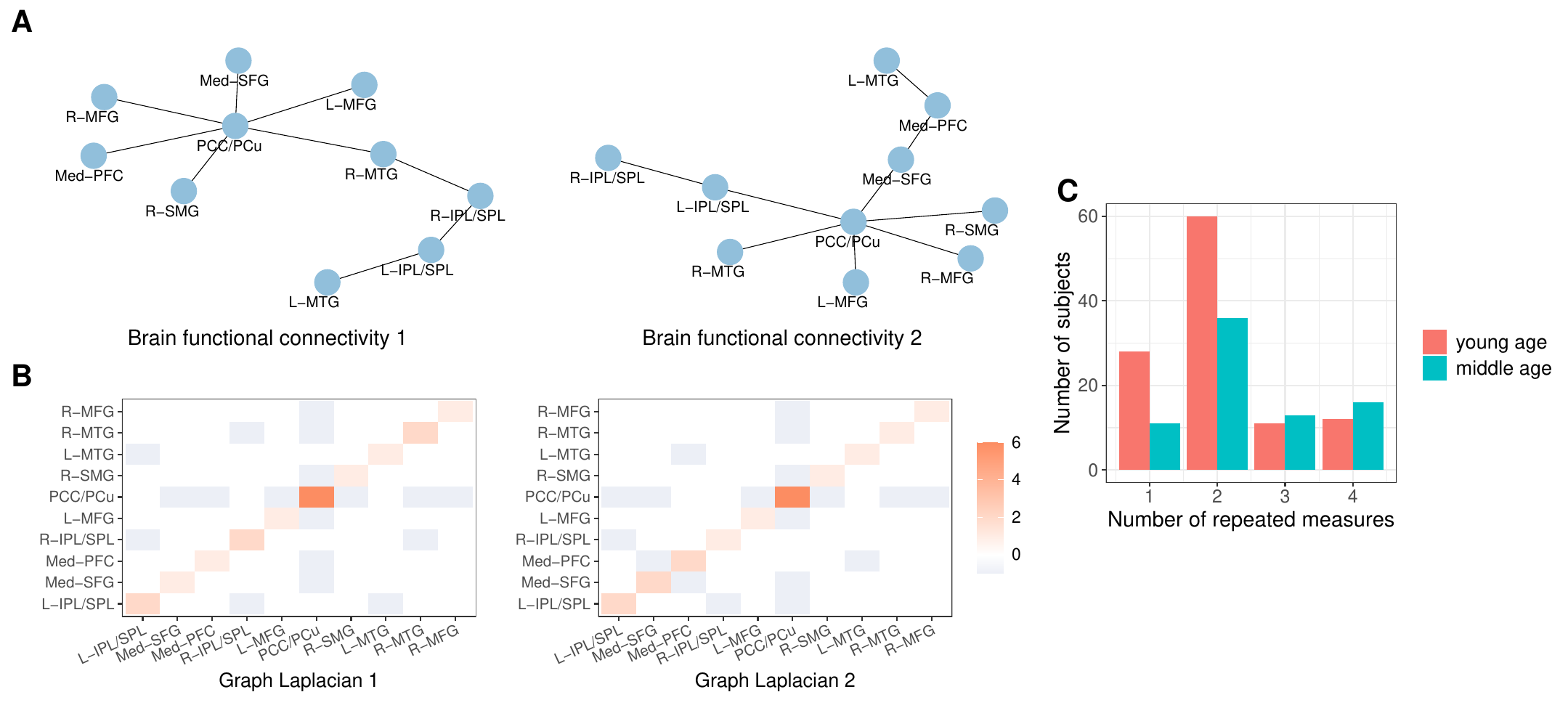}
	\caption{
	(A) Brain functional connectivity networks from two scans of a randomly selected individual, consisting of 10 regions of interest: the left inferior/superior parietal lobule (L-IPL/SPL), medial superior frontal gyrus (Med-SFG), medial prefrontal cortex (Med-PFC), right inferior/superior parietal lobule (R-IPL/SPL), left middle frontal gyrus (L-MFG), posterior cingulate cortex/precuneus (PCC/PCu), right supramarginal gyrus (R-SMG), left middle temporal gyrus (L-MTG), right middle temporal gyrus (R-MTG), and right middle frontal gyrus (R-MFG).
(B) Corresponding graph Laplacians for each network.
(C) Bar plots showing numbers of repeated measures for participants in the young age and middle age groups.
	}\label{fig:example-neuro}
\end{figure}

We focus on addressing the challenges associated with conducting hypothesis testing when the data are repeatedly-observed random objects. The challenges are in several folds: First, non-Euclidean objects lie in general metric spaces without manifold or algebraic structures, rendering many traditional test statistics inapplicable. Consequently, testing methods must rely on metrics that measure pairwise distances between objects. Second, repeated measurements introduce within-subject dependency, and properly utilizing these repeated objects to account for within-subject variability is not straightforward. Moreover, the data are often observed with unequal numbers of repeated measures, adding another layer of complexity to the analysis. As far as we know, few existing methods could effectively handle repeated object data under these scenarios, highlighting a critical gap in the statistical literature.

When analyzing metric space-valued objects, nonparametric tests are preferable because they make minimal assumptions about the data structure. Numerous statistical methods have been proposed that rely on quantifying the similarity or dissimilarity between observations. Examples of such methods include tests based on similarity graphs \citep{friedman1979, schilling1986, henze1988, rosenbaum2005, chen2017new, chen2018weighted}, kernel-based approaches \citep{gretton2012kernel}, and techniques utilizing Fréchet means and variances \citep{dubey2019frechet}. Recently, \cite{zhang2022two} extended graph-based tests to accommodate balanced repeated measurements.

However, these existing tests are not suitable for random objects with unequal repeated measurements. Most of them assume that observations are independent. 
A common approach to addressing this issue is to average the within-subject measurements and use the resulting average as the single observation for each subject \citep{dawson1993size}. 
However, this approach may lead to inflated type 1 errors when the number of repeated measures is imbalanced among subjects (See details in simulation studies).
%For example, suppose each subject in group $k$ ($k = 1, 2$) has $r_k$ repeated measures. Let $Z_{(k)i} = (Z_{(k)i1}, \dots, Z_{(k)ir_k})^\top \sim N(0, I_{r_k})$ represent the observations for subject $i$ in group $k$, where $Z_{(k)ij}$ is the $j$-th measurement, and $I_{r_k}$ is the identity matrix of size $r_k$. Then, the averaged measure $\bar{Z}_{(k)i} = \sum_{j=1}^{r_k}Z_{(k)ij}/r_k$ follows a normal distribution $N(0, 1/r_k)$. When $r_1 \neq r_2$, the variances of the averaged measures differ between groups, leading to a rejection of the null hypothesis of equal distributions, even when the underlying observations are identically distributed. This result is undesirable.
Moreover, averaging oversimplifies the data's complexity and ignores within-subject variability, which is often of clinical interest in biomedical studies such as physical activity and imaging analysis \citep{murray2020measuring, o2017healthy}. 
\cite{zhang2022two} proposed a test that accommodates repeatedly measured random objects. However, it only applies to settings with balanced repeated measures, making it inapplicable to realistic cases with unequal numbers of repetitions.

In this paper, we aim to develop a new nonparametric test for exchangeable repeated random objects applicable to general metric space-valued data with unequal numbers of repeated measures. 
To account for repeated measurements, we define within-subject variability based on the distances between pairs of repeated measures within each individual. 
By integrating our newly defined concept of within-subject variability with the methodology of \cite{dubey2019frechet}, who used Fréchet means and variances to handle metric space-valued random objects, we propose a generalized Fréchet test to simultaneously detect differences in location, scale, and within-subject variability.
%However, the dependence among observations renders classical empirical process techniques directly inapplicable. 
%To overcome this challenge, we define a modified Rademacher complexity. 
Using empirical process techniques, we establish uniform bounds for the empirical Fréchet variance of the repeated random objects. 
Based on that, we derive the asymptotic distribution of the proposed test statistic, which follows a weighted chi-squared distribution. 
Our approach effectively addresses the limitations of existing methods, providing a robust tool for analyzing complex repeated measures data in metric spaces.

We evaluate the proposed test through comprehensive simulations and applications. Specifically, we apply our method to compare repeatedly measured activity intensity distributions and sleep-related outcomes in the iCOMPARE trial. 
Additionally, we analyze repeatedly measured rs-fMRI brain connectivity data collected by New York University Langone Medical
Center. These evaluations demonstrate the effectiveness of our test in real-world scenarios involving complex, unbalanced repeated measurements.

The remainder of this paper is organized as follows: 
In Section \ref{sec:newfrechet}, we introduce a generalized Fr{\'e}chet test for repeated random objects from multiple populations. 
Section \ref{sec:asymanalysis} establishes the asymptotic distribution of the proposed test. 
In Section \ref{sec:simu}, we assess the performance of our test through simulations. 
The proposed test is demonstrated in Section \ref{sec:realanalysis} with applications to the iCOMPARE trial and rs-fMRI data.
We conclude with a discussion in Section \ref{sec:discuss}.

\section{Nonparametric test for repeated random objects from multiple populations}\label{sec:newfrechet}
\subsection{Fr\'echet analysis for repeated random objects}
We first introduce some necessary notations for defining the metric space of the repeated random objects. Assume that random objects take values in a metric space \((\Omega, d)\), where \(d\) is a metric on \(\Omega\). 
Suppose that for each subject \(i\), there are repeated random objects \(Y_{il}\), \(l = 1, \ldots, r_i\), where \(r_i\) is the number of repeated measures, which may vary among subjects.
Assume that these random objects \(Y_{il}\) (\(i = 1, \ldots, n\), \(l = 1, \ldots, r_i\)) belong to \(k\) different groups \(G_1, G_2, \ldots, G_k\). 
The size of each group is \(n_j\) (\(j = 1, \ldots, k\)), so the total number of subjects is \(n = \sum_{j=1}^k n_j\). 
The number of observations in each group is \(N_j = \sum_{i \in G_j} r_i\) (\(j = 1, \ldots, k\)), and the total number of observations is \(N = \sum_{j=1}^k N_j\).

We further assume that within each group, the repeated random objects have the same marginal distribution, and the joint distributions of any subset of within-subject random objects are identical. 
Since algebraic operations are not generally defined for metric-valued objects, we adopt the Fréchet mean to describe the center of the random objects.
For data from group \(j\), the Fréchet mean \(\mu_j\) is given by
\[
\mu_j=\arg\min_{\omega \in \Omega} \bE\left(d^2(\omega, Y_{il})\right), \quad i \in G_j.
\]
The Fréchet mean generalizes the mean of Euclidean data to metric-valued objects, as it reduces to the expected value of random vectors when \(d\) is the \(L_2\) norm.
The sample Fréchet mean \(\widehat{\mu}_j\) for identically distributed random objects \(Y_{il}\) (\(i \in G_j\), \(l = 1, \ldots, r_i\)) is 
\[
\widehat{\mu}_j = \arg\min_{\omega \in \Omega} \frac{1}{N_j} \sum_{i \in G_j} \sum_{l=1}^{r_i} d^2(\omega, Y_{il}).
\]
The Fréchet variance \(V_j\), defined by
\[
V_j = \bE\left(d^2(\mu_j, Y_{il})\right), \quad i \in G_j,
\]
is a generalization of the variance for Euclidean data. 
It quantifies how far the random objects are spread out from their Fréchet mean. 
The sample Fréchet variance \(\widehat{V}_j\) is
\[
\widehat{V}_j = \frac{1}{N_j} \sum_{i \in G_j} \sum_{l=1}^{r_i} d^2\left(\widehat{\mu}_j, Y_{il}\right).
\]
To measure the variability among the repeated objects within subjects, we introduce the concept of within-subject variability:
\begin{equation*}
\rho_j = \bE\left(d^2(Y_{is}, Y_{it})\right), \quad i \in G_j, \quad s \ne t,
\end{equation*}
which is defined as the expected squared distance between any pair of two within-subject repeated random objects. 
In the special case of Euclidean data with \(d\) being the \(L_2\) norm, \(\rho_j\) is equivalent to twice the common variance minus twice the covariance of two within-subject repeated measures.
The sample within-subject variability \(\widehat{\rho}_j\) is given by
\[
\widehat{\rho}_j = \frac{1}{\sum_{i \in G_j} r_i (r_i - 1)} \sum_{i \in G_j} \sum_{s \ne t} d^2\left(Y_{is}, Y_{it}\right).
\]
We also consider the sample Fréchet mean \(\widehat{\mu}_p\) and sample Fréchet variance \(\widehat{V}_p\) of the pooled data:
\[
\widehat{\mu}_p = \arg\min_{\omega \in \Omega} \frac{1}{N} \sum_{j=1}^k \sum_{i \in G_j} \sum_{l=1}^{r_i} d^2\left(\omega, Y_{il}\right), \quad \widehat{V}_p = \frac{1}{N} \sum_{j=1}^k \sum_{i \in G_j} \sum_{l=1}^{r_i} d^2\left(\widehat{\mu}_p, Y_{il}\right).
\]

\subsection{A Generalized Fr\'echet test for repeated random objects}
We are interested in testing the null hypothesis that the Fr\'echet means, variances, and within-subject variabilities of the population distributions across the \(k\) groups are identical.
Based on the definitions of the Fr\'echet means, variances, and within-subject variabilities, we note that \(V_j\), \(\widehat{V}_j\), \(\rho_j\), and \(\widehat{\rho}_j\) are real-valued, while \(\mu_j\) and \(\widehat{\mu}_j\) are metric-valued objects. 
The Fr\'echet variances and within-subject variabilities among different groups can be compared directly, whereas comparing the Fr\'echet means directly is challenging due to the non-Euclidean nature of the data.
Therefore, we detect differences in the means indirectly by comparing the Fr\'echet variance of the pooled data with a weighted average of the group-specific Fr\'echet variances. 

Let \(\widehat{\lambda}_j = N_j / N\), \(\widehat{u}_{jl} = \widehat{\lambda}_j \widehat{\lambda}_l / (\widehat{\sigma}_j^2 \widehat{\sigma}_l^2)\), and \(\widehat{b}_{jl} = \widehat{\lambda}_j \widehat{\lambda}_l / (\widehat{\gamma}_j^2 \widehat{\gamma}_l^2)\), where \(\widehat{\sigma}_j\) and \(\widehat{\gamma}_j\) are consistent estimators of the asymptotic variances of \(N_j^{1/2} \widehat{V}_j\) and \(N_j^{1/2} \widehat{\rho}_j\), respectively.
We consider the following auxiliary statistics:
\[
D_n = \widehat{V}_p - \sum_{j=1}^k \widehat{\lambda}_j \widehat{V}_j, \quad U_n = \sum_{j<l} \widehat{u}_{jl} (\widehat{V}_j - \widehat{V}_l)^2, \quad R_n = \sum_{j<l} \widehat{b}_{jl} (\widehat{\rho}_j - \widehat{\rho}_l)^2.
\]
Here, \(D_n\) and \(U_n\) are defined similarly to those in \cite{dubey2019frechet}, aiming to detect differences in group means and variances, respectively. 
The statistic \(R_n\) targets differences in within-subject variability among groups.
We then propose the test statistic
\[
Q_n = \frac{N D_n^2}{\sum_{j=1}^k \widehat{\lambda}_j^2 \widehat{\sigma}_j^2} + \frac{N U_n}{\sum_{j=1}^k\widehat{\lambda}_j/\widehat{\sigma}_j^2} + \frac{N R_n}{\sum_{j=1}^k \widehat{\lambda}_j/\widehat{\gamma}_j^2},
\]
which simultaneously detects differences in means, variances, and within-subject variabilities.

The following theorem provides the analytic expressions for \(\widehat{\sigma}_j\) and \(\widehat{\gamma}_j\) so that the proposed test statistic can be computed efficiently. 
Detailed proofs are provided in the Supplementary Material.

\begin{theorem}\label{th:expression}
A consistent estimator of the asymptotic variance of \(N_j^{1/2} \widehat{V}_j\) is given by
\begin{align*}
\widehat{\sigma}_j^2 = \frac{1}{N_j} \sum_{i \in G_j} \left( \sum_{l=1}^{r_i} d^2(\widehat{\mu}_j, Y_{il}) \right)^2 - \frac{\sum_{i \in G_j} r_i^2}{N_j} \widehat{V}_j^2.
\end{align*}
A consistent estimator of the asymptotic variance of \(N_j^{1/2} \widehat{\rho}_j\) is given by
\begin{align*}
\widehat{\gamma}_j^2 = \frac{N_j}{\left( \sum_{i \in G_j} r_i^2 - N_j \right)^2} \sum_{i \in G_j} \left( \sum_{s \ne t} d^2(Y_{is}, Y_{it}) \right)^2 - \frac{N_j \sum_{i \in G_j} r_i^2 (r_i - 1)^2}{\left( \sum_{i \in G_j} r_i^2 - N_j \right)^2} \widehat{\rho}_j^2.
\end{align*}
\end{theorem}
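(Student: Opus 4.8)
The plan is to identify, for each of the two statistics, the population asymptotic variance through an asymptotically linear (influence-function) representation of the centered statistic as a normalized sum of independent across-subject contributions, and then to verify that the proposed plug-in formula is a consistent method-of-moments estimator of that limit, with an explicit bias correction. Throughout I would work group by group (fixing $j$), treat the per-subject contributions as independent across $i\in G_j$, and exploit within-subject exchangeability so that every joint moment of the repeated measures of subject $i$ depends only on $r_i$. The relevant weak laws would be taken for triangular arrays of independent, non-identically distributed summands, invoking the moment/boundedness conditions on the $r_i$ carried over from the asymptotic framework of Section~\ref{sec:asymanalysis}.

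For $\widehat\sigma_j^2$, write $W_{il}=d^2(\mu_j,Y_{il})$, $\tau_j^2=\var(W_{il})$, $c_j=\cov(W_{is},W_{it})$ for $s\neq t$, and $S_i=\sum_{l=1}^{r_i}W_{il}$, so that $\bE(S_i)=r_iV_j$. The first step is the linearization $N_j^{1/2}(\widehat V_j-V_j)=N_j^{-1/2}\sum_{i\in G_j}(S_i-r_iV_j)+o_P(1)$, i.e.\ replacing the sample Fr\'echet mean $\widehat\mu_j$ by $\mu_j$ is asymptotically negligible; this is exactly where the uniform bounds on the empirical Fr\'echet variance are used, together with the consistency $d(\widehat\mu_j,\mu_j)\to_P 0$ and the optimality of $\widehat\mu_j$. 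Since the $S_i$ are independent, the asymptotic variance is $\sigma_j^2=\lim N_j^{-1}\sum_{i\in G_j}\var(S_i)$, and exchangeability gives $\var(S_i)=r_i\tau_j^2+r_i(r_i-1)c_j$, hence $\sigma_j^2=\tau_j^2+N_j^{-1}\sum_{i\in G_j}r_i(r_i-1)\,c_j$ (using $\sum_{i\in G_j}r_i=N_j$). To match the proposed estimator, note $\bE(S_i^2)=\var(S_i)+r_i^2V_j^2$, so $N_j^{-1}\sum_i S_i^2$ converges in probability to $\tau_j^2+N_j^{-1}\sum_i r_i(r_i-1)c_j+N_j^{-1}\sum_i r_i^2\,V_j^2$; subtracting $N_j^{-1}(\sum_i r_i^2)\,\widehat V_j^2$ cancels the last term and leaves $\sigma_j^2$. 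The remaining work is to show the $\widehat\mu_j$-versus-$\mu_j$ substitution inside the squared terms $(\sum_l d^2(\widehat\mu_j,Y_{il}))^2$ is also negligible and that $\widehat V_j^2\to_P V_j^2$, which follow from continuity and the same uniform bounds.

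The argument for $\widehat\gamma_j^2$ is parallel but cleaner because no argmin plug-in appears. Set $T_i=\sum_{s\neq t}d^2(Y_{is},Y_{it})$, so $\bE(T_i)=r_i(r_i-1)\rho_j$ and $\widehat\rho_j=M_j^{-1}\sum_{i\in G_j}T_i$ with $M_j=\sum_{i\in G_j}r_i(r_i-1)=\sum_{i\in G_j}r_i^2-N_j$, which already exhibits $\widehat\rho_j$ as an exactly unbiased weighted average of independent subject contributions. Consequently $\var(N_j^{1/2}\widehat\rho_j)=N_jM_j^{-2}\sum_{i\in G_j}\var(T_i)=:\gamma_j^2$, and a triangular-array central limit theorem for the independent $T_i$ delivers asymptotic normality with this variance. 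Since $\bE(T_i^2)=\var(T_i)+r_i^2(r_i-1)^2\rho_j^2$, the average $N_jM_j^{-2}\sum_i T_i^2$ converges to $\gamma_j^2+N_jM_j^{-2}\sum_i r_i^2(r_i-1)^2\rho_j^2$; subtracting the bias term $N_jM_j^{-2}(\sum_i r_i^2(r_i-1)^2)\,\widehat\rho_j^2$ recovers $\gamma_j^2$, using $\widehat\rho_j^2\to_P\rho_j^2$ and a weak law for the $T_i^2$. This is precisely the stated expression once $M_j$ is rewritten as $\sum_i r_i^2-N_j$.

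The main obstacle is the linearization step for $\widehat\sigma_j^2$: controlling the effect of estimating the Fr\'echet mean. Unlike the Euclidean case there is no closed-form influence function, so I would follow an M-estimation route—use $\widehat\mu_j=\arg\min_\omega N_j^{-1}\sum_{i,l}d^2(\omega,Y_{il})$ to get the one-sided bound $\widehat V_j\le N_j^{-1}\sum_{i,l}d^2(\mu_j,Y_{il})$, and establish a matching reverse bound up to $o_P(N_j^{-1/2})$ by combining a local curvature (positive-definiteness) condition at $\mu_j$ with uniform (bracketing/entropy) control of the empirical process $\omega\mapsto N_j^{-1}\sum_{i,l}\{d^2(\omega,Y_{il})-d^2(\mu_j,Y_{il})\}$ over a shrinking neighborhood of $\mu_j$. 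The repeated-measures dependence complicates these empirical-process bounds, since the summands are only independent at the subject level; I would therefore aggregate within subject first and apply the concentration and maximal inequalities at the level of the $n_j$ independent subject blocks, which is also what makes the $\sum_i r_i^2$-type quantities appear throughout. The $\widehat\gamma_j^2$ part needs only standard triangular-array moment bookkeeping and poses no comparable difficulty.
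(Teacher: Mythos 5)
Your proposal is correct and follows essentially the same route as the paper: both establish consistency by a law of large numbers over the independent subject-level blocks combined with the exact bias identity $\bE(S_i^2)=\var(S_i)+r_i^2V_j^2$ (the paper writes this out by expanding $\bigl(\sum_l d^2(\widehat\mu_j,Y_{il})\bigr)^2$ into $d^4$ terms and cross terms), and both handle the substitution of $\widehat\mu_j$ for $\mu_j$; the $\widehat\gamma_j^2$ part is likewise the same bookkeeping the paper omits as "similar." The only notable difference is that for this theorem the paper needs just the crude consistency $d(\widehat\mu_j,\mu_j)=o_P(1)$ with Lipschitz bounds such as $|d^4(\widehat\mu_j,Y)-d^4(\mu_j,Y)|\le 4\,\mathrm{diam}^3(\Omega)\,d(\widehat\mu_j,\mu_j)$, whereas you invoke the sharper $o_P(N_j^{-1/2})$ empirical-process linearization, which is only required later for the central limit theorem.
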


\section{Asymptotic analysis}\label{sec:asymanalysis}
\subsection{Asymptotic null distribution}
In this section, we aim to derive the asymptotic null distribution of the proposed test statistic \(Q_n\). 
Before doing so, we analyze the asymptotic properties of the sample Fr\'echet mean, variance, and within-subject variability, as they are key components of \(Q_n\). 
All proofs are provided in the Supplementary Material.

We begin by stating the conditions.

\begin{condition}\label{assp:unique}
For each group \(j\), the Fr\'echet mean \(\mu_j\) and its estimator \(\widehat{\mu}_j\) exist and are unique. 
Moreover, for any \(\epsilon > 0\),
\[
\inf_{d(\omega, \mu_j) > \epsilon} \bE\left(d^2(\omega, Y_{il}) \right) > \bE\left(d^2(\mu_j, Y_{il}) \right), \quad i \in G_j.
\]
\end{condition}
Condition \ref{assp:unique} is common when establishing the consistency of the sample Fr\'echet mean \(\widehat{\mu}_j\). 
Since \(\widehat{\mu}_j\) is an M-estimator of the empirical process \(M_n(\omega) = \sum_{i \in G_j} \sum_{l=1}^{r_i} d^2(\omega, Y_{il})/N_j\), which converges to the population process \(M(\omega) = \bE(d^2(\omega, Y_{il}))\), Condition \ref{assp:unique} ensures that \(\widehat{\mu}_j\) converges in probability to \(\mu_j\), as implied by Corollary 3.2.3 in \cite{van1996weak}.

Since \(\widehat{\mu}_j\) depends on the observations \(Y_{il}\) (\(i \in G_j\), \(l = 1, \ldots, r_i\)), the quantities \(d^2(\widehat{\mu}_j, Y_{il})\) are not independent. 
This dependence makes it challenging to derive a central limit theorem for the sample Fr\'echet variance. 
Additionally, the dependency arising from repeated measures adds another layer of complexity.
To address the dependence issue, we control the uniform bound
\begin{equation*}
\sup_{\omega \in \mathcal{B}_\delta(\mu_j)} \left| \frac{1}{N_j} \sum_{i \in G_j} \sum_{l=1}^{r_i} d^2(\omega, Y_{il}) - \bE(d^2(\omega, Y_{il})) \right|,
\end{equation*}
where \(\mathcal{B}_\delta(\mu_j) = \{ \omega \in \Omega : d(\omega, \mu_j) < \delta \}\) is a ball of radius \(\delta\) centered at \(\mu_j\) in the metric space \(\Omega\). 
Let \(\mathcal{N}(\mathcal{B}_\delta(\mu_j), d, \epsilon)\) be the covering number of \(\mathcal{B}_\delta(\mu_j)\) using balls of radius \(\epsilon\).
The following lemma shows that the uniform bound can be controlled using the covering number.

\begin{lemma}\label{lem:unibound}
Let \(\mathrm{diam}(\Omega) = \sup_{\omega, \omega' \in \Omega} d(\omega, \omega')\) be the diameter of \(\Omega\). Then
\begin{align*}
&\bE \left(\sup_{\omega \in \mathcal{B}_\delta(\mu_j)} \left| \frac{1}{N_j} \sum_{i \in G_j} \sum_{l=1}^{r_i} d^2(\omega, Y_{il}) - \bE(d^2(\omega, Y_{il})) \right| \right) \\
\leq\; & C \frac{\mathrm{diam}(\Omega) \delta \sqrt{\sum_{i \in G_j} r_i^2}}{N_j} \int_0^1 \sqrt{ \log \mathcal{N}(\mathcal{B}_\delta(\mu_j), d, 2\delta \epsilon) } \, \dd\epsilon,
\end{align*}
where \(C\) is a constant.
\end{lemma}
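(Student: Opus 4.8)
The plan is to treat the stated quantity as the expected supremum of an empirical process indexed by $\omega$ over the ball $\cB_\delta(\mu_j)$, and to bound it by the standard two-step recipe: symmetrization followed by a chaining (Dudley entropy integral) argument. Two structural features must be respected. First, the summands $d^2(\omega, Y_{il})$ are independent \emph{across subjects} $i \in G_j$ but dependent \emph{within} a subject (the repeated measures), so all randomization must be carried out at the subject level. Second, the map $\omega \mapsto d^2(\omega, y)$ is Lipschitz: writing $d^2(\omega,y)-d^2(\omega',y) = \bigl(d(\omega,y)-d(\omega',y)\bigr)\bigl(d(\omega,y)+d(\omega',y)\bigr)$ and using the triangle inequality together with $d(\omega,y)+d(\omega',y)\le 2\,\mathrm{diam}(\Omega)$ gives $|d^2(\omega,y)-d^2(\omega',y)| \le 2\,\mathrm{diam}(\Omega)\, d(\omega,\omega')$. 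This Lipschitz bound is the single analytic input that controls the increments of the process.

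The first move is to \emph{localize}. I would recenter at $\mu_j$ and work with the increment functions $f_\omega(\cdot) = d^2(\omega,\cdot) - d^2(\mu_j,\cdot)$, which vanish at $\omega = \mu_j$ and, by the Lipschitz bound, have envelope $2\,\mathrm{diam}(\Omega)\,d(\omega,\mu_j) \le 2\,\mathrm{diam}(\Omega)\,\delta$ on the ball; it is precisely this localization that produces the factor $\delta$ in the final bound. Grouping the repeated measures of subject $i$ into $F_\omega(i) = \sum_{l=1}^{r_i}\bigl(f_\omega(Y_{il}) - \bE f_\omega(Y_{il})\bigr)$ reduces the quantity to $N_j^{-1}\sum_{i\in G_j}F_\omega(i)$, with the $F_\omega(i)$ independent across $i$. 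Applying the symmetrization inequality \citep[Lemma~2.3.1]{van1996weak} at the subject level then bounds the expected supremum by $2 N_j^{-1}\,\bE\sup_\omega|\sum_{i\in G_j}\epsilon_i F_\omega(i)|$, where the $\epsilon_i$ are i.i.d.\ Rademacher signs attached to \emph{subjects}, not to individual observations.

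Conditionally on the data, $\omega\mapsto \sum_i \epsilon_i F_\omega(i)$ is a sub-Gaussian process whose increments have variance proxy $\sum_{i\in G_j}\bigl(F_\omega(i)-F_{\omega'}(i)\bigr)^2$. Using the Lipschitz bound term by term gives $|F_\omega(i)-F_{\omega'}(i)|\le 2 r_i\,\mathrm{diam}(\Omega)\,d(\omega,\omega')$, so the variance proxy is at most $4\,\mathrm{diam}(\Omega)^2\bigl(\sum_{i\in G_j}r_i^2\bigr)d(\omega,\omega')^2$; that is, the process is sub-Gaussian with respect to the rescaled metric $\rho = 2\,\mathrm{diam}(\Omega)\sqrt{\sum_{i\in G_j}r_i^2}\,d$. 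The appearance of $\sqrt{\sum_i r_i^2}$, rather than $\sqrt{N_j}=\sqrt{\sum_i r_i}$, is a direct consequence of bundling each subject's $r_i$ repeated measures into one block. Dudley's entropy integral then yields $\bE_\epsilon\sup_\omega|\sum_i\epsilon_i F_\omega(i)| \le C\int_0^{\mathrm{diam}_\rho}\sqrt{\log\cN(\cB_\delta(\mu_j),\rho,u)}\,\dd u$, with no additive base-point term because $F_{\mu_j}\equiv 0$.

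It remains to rewrite this in terms of $d$-covering numbers. Since $\rho$ is a constant multiple of $d$, we have $\cN(\cB_\delta(\mu_j),\rho,u) = \cN(\cB_\delta(\mu_j),d,u/(2\,\mathrm{diam}(\Omega)\sqrt{\sum_i r_i^2}))$, and the $\rho$-diameter of the ball is at most $2\delta\cdot 2\,\mathrm{diam}(\Omega)\sqrt{\sum_i r_i^2}$ because its $d$-diameter is at most $2\delta$. Substituting $u = 2\,\mathrm{diam}(\Omega)\sqrt{\sum_i r_i^2}\,(2\delta\epsilon)$ turns the integral into $2\,\mathrm{diam}(\Omega)\sqrt{\sum_i r_i^2}\,(2\delta)\int_0^1\sqrt{\log\cN(\cB_\delta(\mu_j),d,2\delta\epsilon)}\,\dd\epsilon$; combined with the $N_j^{-1}$ prefactor and the symmetrization constant, this is exactly the claimed bound, the $2\delta\epsilon$ inside the covering number being precisely the image of the ball's diameter under the change of variables. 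I expect the main obstacle to be the localization and base-point bookkeeping: one must anchor the chaining at $\mu_j$ and work with the recentered increments $f_\omega$, since it is this localized modulus (and not the raw supremum, which carries an order-$\mathrm{diam}(\Omega)^2$ fluctuation at $\omega=\mu_j$ that does not scale with $\delta$) that admits the stated bound. Verifying that the subject-level symmetrization and the conditional sub-Gaussian chaining interact correctly — in particular that the data-dependent $L_2$ metric is dominated by the deterministic $\rho$, so the expectation over the data passes through cleanly — is the other delicate point.
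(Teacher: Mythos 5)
Your proposal is correct and follows the same two-step route as the paper: subject-level symmetrization (Rademacher signs attached to subjects, which is exactly where $\sqrt{\sum_{i\in G_j} r_i^2}$ rather than $\sqrt{N_j}$ enters), then Dudley's entropy integral driven by the Lipschitz bound $|d^2(\omega,y)-d^2(\omega',y)|\le 2\,\mathrm{diam}(\Omega)\,d(\omega,\omega')$ and the change of variables that produces the $2\delta\epsilon$ inside the covering number. The one genuine difference is your localization at $\mu_j$, and it is substantive rather than cosmetic. The paper chains the raw class $\cF_j=\{d^2(\omega,\cdot):\omega\in\cB_\delta(\mu_j)\}$ and anchors the dyadic decomposition at $f_0=0$, implicitly treating $\{0\}$ as a $D$-cover with $D\le 4\,\mathrm{diam}(\Omega)\delta$; this step is not justified, since $\|d^2(\omega,\cdot)\|_{L^2(\PP_n)}$ is of order $\mathrm{diam}^2(\Omega)$, not of order $\delta$. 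Your remark that the raw supremum carries an order-$\mathrm{diam}^2(\Omega)\sqrt{\sum_{i}r_i^2}/N_j$ fluctuation at $\omega=\mu_j$ that does not scale with $\delta$ is exactly right: with the uncentered process inside the absolute value, the stated bound cannot hold for small $\delta$, because the left side is bounded below by the fluctuation of the single average $N_j^{-1}\sum_{i\in G_j}\sum_l d^2(\mu_j,Y_{il})$ while the right side vanishes. Your recentered version, with increments $f_\omega=d^2(\omega,\cdot)-d^2(\mu_j,\cdot)$ vanishing at the base point, is the statement that is actually true, and it is also all that the downstream argument needs (in the proof of Theorem 2 the lemma is only ever applied to differences of the process at $\widehat\mu_j$ and at $\mu_j$). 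Apart from this correction, your blocked Lipschitz increments, the sub-Gaussianity with respect to the rescaled metric $2\,\mathrm{diam}(\Omega)\sqrt{\sum_i r_i^2}\,d$, and the rescaling of the entropy integral coincide with the paper's argument.
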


To further obtain the central limit theorem for \(\widehat{V}_j\), we need the following conditions.
\begin{condition}\label{assp:ri}
There is a uniform lower bound $\widetilde N$ such that $N_j\geq \widetilde N$ for all $j$, and $\widetilde N\rightarrow\infty$ as $n\rightarrow\infty$.
Additionally, there exists a constant \(C\) such that \(r_i \leq C\) for all \(i = 1, \ldots, n\).
\end{condition}

\begin{condition}\label{assp:bound}
The metric space $\Omega$ is bounded; that is, $\mathrm{diam}(\Omega)=\sup_{\omega, \omega' \in \Omega} d(\omega, \omega')<\infty$.
\end{condition}

\begin{condition}\label{assp:complex}
For each group $j$, as \(\delta \to 0\),
\[
\delta \int_0^1 \sqrt{ \log \mathcal{N}(\mathcal{B}_\delta(\mu_j), d, 2\delta \epsilon) } \, \dd\epsilon \to 0.
\]
\end{condition}

\begin{condition}\label{assp:Lyapunov-sigma}
For each group \(j\), let
\[
\sigma_j^2 = \operatorname{Var}\left( d^2(\mu_j, Y_{il}) \right) + (a_r - 1) \operatorname{Cov}\left( d^2(\mu_j, Y_{il}), d^2(\mu_j, Y_{ik}) \right), \quad i \in G_j, \quad l \ne k,
\]
where \(a_r = \lim_{n_j \to \infty} \sum_{i \in G_j} r_i^2 / N_j\).
For some \(\eta > 0\),
\[
\frac{1}{\sigma_j^{2 + \eta} N_j^{1 + \eta / 2}} \sum_{i \in G_j} \bE \left| \sum_{l=1}^{r_i} \left\{d^2(\mu_j, Y_{il}) - \bE(d^2(\mu_j, Y_{il}))\right\} \right|^{2 + \eta} \to 0.
\]
\end{condition}

Condition \ref{assp:ri} is a common assumption that limits the number of repeated measures per subject. 
Condition \ref{assp:bound} requires that the metric space $\Omega$ is bounded.
Condition \ref{assp:complex} constrains the complexity of the metric space \(\Omega\). 
If \(\log \mathcal{N}(\mathcal{B}_\delta(\mu_j), d, \epsilon) \leq C \epsilon^\kappa\) for some constants \(C \geq 0\) and \(\kappa > -2\), then Condition \ref{assp:complex} is satisfied. 
Condition \ref{assp:Lyapunov-sigma} is a Lyapunov-type condition, required to derive the central limit theorem for the sums \(\sum_{l=1}^{r_i} d^2(\mu_j, Y_{il})\), which are independent but not necessarily identically distributed.

By applying empirical process techniques to control the uniform bound in Lemma \ref{lem:unibound} and under the above conditions, we obtain the following central limit theorem for the sample Fr\'echet variance.

\begin{theorem}\label{thm:asym-v}
Under Conditions \ref{assp:unique}--\ref{assp:Lyapunov-sigma},
\[
N_j^{1/2} (\widehat{V}_j - V_j) \xrightarrow{d} N(0, \sigma_j^2).
\]
\end{theorem}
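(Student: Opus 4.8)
The plan is to write $N_j^{1/2}(\widehat{V}_j - V_j)$ as a sum of a term that obeys a central limit theorem and a remainder that captures the cost of estimating $\mu_j$, and to show the remainder is asymptotically negligible. Recalling the notation $M_n(\omega) = N_j^{-1}\sum_{i\in G_j}\sum_{l=1}^{r_i} d^2(\omega, Y_{il})$ and $M(\omega) = \bE(d^2(\omega, Y_{il}))$, so that $\widehat{V}_j = M_n(\widehat{\mu}_j)$ and $V_j = M(\mu_j)$, I would decompose
\[
N_j^{1/2}(\widehat{V}_j - V_j) = \underbrace{N_j^{1/2}\big(M_n(\mu_j) - V_j\big)}_{T_1} + \underbrace{N_j^{1/2}\big(M_n(\widehat{\mu}_j) - M_n(\mu_j)\big)}_{T_2}.
\]
The first term $T_1$ is a normalized sum of between-subject independent, mean-zero quantities evaluated at the fixed (nonrandom) population mean $\mu_j$, for which I would establish a central limit theorem; the second term $T_2$ measures the effect of substituting $\widehat{\mu}_j$ for $\mu_j$, and the goal is to prove $T_2 = o_P(1)$.

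For $T_2$, I would exploit the two extremal characterizations: $\widehat{\mu}_j$ minimizes $M_n$, so $M_n(\widehat{\mu}_j) \le M_n(\mu_j)$, while $\mu_j$ minimizes $M$, so $M(\mu_j)\le M(\widehat{\mu}_j)$. Combining these gives, on the event $\{\widehat{\mu}_j\in\mathcal{B}_\delta(\mu_j)\}$,
\[
0 \le M_n(\mu_j) - M_n(\widehat{\mu}_j) \le \big[M_n(\mu_j) - M(\mu_j)\big] - \big[M_n(\widehat{\mu}_j) - M(\widehat{\mu}_j)\big] \le 2\sup_{\omega\in\mathcal{B}_\delta(\mu_j)}\big|M_n(\omega) - M(\omega)\big|,
\]
where the event has probability tending to one for each fixed $\delta$ by the consistency guaranteed in Condition \ref{assp:unique}. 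Multiplying by $N_j^{1/2}$ and applying Lemma \ref{lem:unibound} (finite by Condition \ref{assp:bound}), the crucial observation is that Condition \ref{assp:ri} yields $\sum_{i\in G_j} r_i^2 \le C N_j$, hence $N_j^{1/2}\cdot \sqrt{\sum_{i\in G_j} r_i^2}/N_j \le C^{1/2}$ is bounded, so the inflating factor $N_j^{1/2}$ exactly absorbs the $N_j^{-1/2}$ decay in the bound and leaves
\[
N_j^{1/2}\,\bE\Big(\sup_{\omega\in\mathcal{B}_\delta(\mu_j)}|M_n(\omega)-M(\omega)|\Big) \le C'\,\mathrm{diam}(\Omega)\,\delta\int_0^1\sqrt{\log\mathcal{N}(\mathcal{B}_\delta(\mu_j),d,2\delta\epsilon)}\,\dd\epsilon.
\]
By Condition \ref{assp:complex} the right-hand side can be made arbitrarily small by shrinking $\delta$, so a Markov-plus-consistency argument (choose $\delta$ to control the complexity integral, then send $n\to\infty$ so that $\{\widehat{\mu}_j\in\mathcal{B}_\delta(\mu_j)\}$ has probability near one) gives $T_2 = o_P(1)$. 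I expect this localization step to be the main obstacle, precisely because it requires the double limit in $\delta$ and $n$ and a careful accounting of the $r_i$-dependent scaling to verify that the inflation by $N_j^{1/2}$ is compensated.

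For $T_1$, I would set $X_i = \sum_{l=1}^{r_i}\{d^2(\mu_j, Y_{il}) - V_j\}$ for $i\in G_j$, so that $T_1 = N_j^{-1/2}\sum_{i\in G_j} X_i$ is a sum of independent, mean-zero, but non-identically distributed random variables (they differ through $r_i$). Using within-subject exchangeability — equal marginals and equal pairwise covariances — I would compute $\operatorname{Var}(X_i) = r_i\operatorname{Var}(d^2(\mu_j,Y_{il})) + r_i(r_i-1)\operatorname{Cov}(d^2(\mu_j,Y_{il}),d^2(\mu_j,Y_{ik}))$, from which
\[
\frac{1}{N_j}\sum_{i\in G_j}\operatorname{Var}(X_i) \to \operatorname{Var}(d^2(\mu_j,Y_{il})) + (a_r-1)\operatorname{Cov}(d^2(\mu_j,Y_{il}),d^2(\mu_j,Y_{ik})) = \sigma_j^2
\]
by the definition $a_r = \lim \sum_{i\in G_j} r_i^2/N_j$. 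Condition \ref{assp:Lyapunov-sigma} is exactly the Lyapunov $(2+\eta)$-moment condition for the triangular array $\{X_i\}$, so the Lyapunov central limit theorem gives $T_1 \xrightarrow{d} N(0,\sigma_j^2)$. Finally, combining the two pieces via Slutsky's theorem yields $N_j^{1/2}(\widehat{V}_j - V_j) = T_1 + o_P(1) \xrightarrow{d} N(0,\sigma_j^2)$, completing the proof.
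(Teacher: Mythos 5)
Your proposal is correct and follows essentially the same route as the paper: the same decomposition into $N_j^{1/2}(M_n(\widehat\mu_j)-M_n(\mu_j))$ and $N_j^{1/2}(M_n(\mu_j)-V_j)$, the same localization of the first term via the two extremal characterizations, Lemma \ref{lem:unibound}, and the $\delta$-then-$n$ limit under Conditions \ref{assp:ri}--\ref{assp:complex}, and the same variance computation plus Lyapunov CLT for the second term. The only cosmetic difference is that your one-sided chain yields the constant $2$ in front of the supremum where the paper's triangle-inequality version yields $4$; this does not change the argument.
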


Similar to Condition \ref{assp:Lyapunov-sigma}, the following condition is required to derive the central limit theorem for the sample within-subject variability.

\begin{condition}\label{assp:Lyapunov-gamma}
For each group \(j\), let
\begin{align*}
\gamma_j^2 = & \; a_{\gamma,1} \operatorname{Var}\left( d^2(Y_{is}, Y_{it}) \right) + a_{\gamma,2} \operatorname{Cov}\left( d^2(Y_{is}, Y_{it}), d^2(Y_{is}, Y_{iq}) \right) \\
& + a_{\gamma,3} \operatorname{Cov}\left( d^2(Y_{is}, Y_{it}), d^2(Y_{ip}, Y_{iq}) \right), \quad i \in G_j, \quad s, t, p, q \text{ are distinct},
\end{align*}
where
\[
a_{\gamma,1} = \frac{2}{a_r - 1}, \quad a_{\gamma,2} = \lim_{n_j \to \infty} \frac{4 N_j \sum_{i \in G_j} r_i (r_i - 1)(r_i - 2)}{\left( \sum_{i \in G_j} r_i^2 - N_j \right)^2},
\]
\[
a_{\gamma,3} = \lim_{n_j \to \infty} \frac{N_j \sum_{i \in G_j} r_i (r_i - 1)(r_i - 2)(r_i - 3)}{\left( \sum_{i \in G_j} r_i^2 - N_j \right)^2}.
\]

For some \(\eta > 0\),
\[
\frac{1}{\gamma_j^{2 + \eta} N_j^{1 + \eta / 2}} \sum_{i \in G_j} \bE \left| \sum_{s \ne t} \left\{ d^2(Y_{is}, Y_{it}) - \bE(d^2(Y_{is}, Y_{it}))\right\} \right|^{2 + \eta} \to 0.
\]
\end{condition}

\begin{theorem}\label{th:rho}
Under Conditions \ref{assp:ri} and \ref{assp:Lyapunov-gamma},
\[
N_j^{1/2} (\widehat{\rho}_j - \rho_j) \xrightarrow{d} N(0, \gamma_j^2).
\]
\end{theorem}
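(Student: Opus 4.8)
The plan is to exploit the fact that, unlike $\widehat V_j$, the statistic $\widehat\rho_j$ contains no estimated plug-in argument, so no empirical-process control (and hence none of Conditions \ref{assp:unique}, \ref{assp:bound}--\ref{assp:Lyapunov-sigma}) is needed; the entire argument reduces to a Lyapunov central limit theorem for a sum of independent but non-identically distributed subject-level terms. First I would write $W_i = \sum_{s\ne t} d^2(Y_{is},Y_{it})$ and $M_j = \sum_{i\in G_j} r_i(r_i-1) = \sum_{i\in G_j} r_i^2 - N_j$, so that $\widehat\rho_j = M_j^{-1}\sum_{i\in G_j} W_i$. By the assumed exchangeability of the within-subject objects, $\bE(W_i) = r_i(r_i-1)\rho_j$, whence $\widehat\rho_j$ is unbiased and
\[
\widehat\rho_j - \rho_j = \frac{1}{M_j}\sum_{i\in G_j} Z_i, \qquad Z_i := W_i - \bE(W_i).
\]
The $Z_i$ are independent across $i$ (distinct subjects are independent), mean zero, and non-identically distributed because the $r_i$ differ.

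The core computation is the variance $s_n^2 := \operatorname{Var}\!\big(\sum_{i\in G_j} Z_i\big) = \sum_{i\in G_j}\operatorname{Var}(W_i)$. Expanding $\operatorname{Var}(W_i)$ as a double sum of covariances of $d^2(Y_{is},Y_{it})$ over ordered pairs, and grouping the $[r_i(r_i-1)]^2$ terms according to the overlap of the two index pairs, exchangeability and the symmetry $d^2(Y_{is},Y_{it})=d^2(Y_{it},Y_{is})$ collapse them into three values: $v_j=\operatorname{Var}(d^2(Y_{is},Y_{it}))$, $c_{j,1}=\operatorname{Cov}(d^2(Y_{is},Y_{it}),d^2(Y_{is},Y_{iq}))$ (one shared index), and $c_{j,2}=\operatorname{Cov}(d^2(Y_{is},Y_{it}),d^2(Y_{ip},Y_{iq}))$ (no shared index). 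The combinatorial multiplicities are $2r_i(r_i-1)$, $4r_i(r_i-1)(r_i-2)$, and $r_i(r_i-1)(r_i-2)(r_i-3)$ respectively, and one checks these sum to $[r_i(r_i-1)]^2$. Summing over $i$, dividing by $M_j^2$, multiplying by $N_j$, and passing to the limit then yields
\[
N_j\,\operatorname{Var}(\widehat\rho_j) = \frac{N_j}{M_j^2}\,s_n^2 \longrightarrow a_{\gamma,1}v_j + a_{\gamma,2}c_{j,1} + a_{\gamma,3}c_{j,2} = \gamma_j^2,
\]
where the coefficient of $v_j$ converges to $2N_j/M_j = 2/\big(\sum_{i}r_i^2/N_j - 1\big)\to 2/(a_r-1)=a_{\gamma,1}$, and the coefficients of $c_{j,1}$ and $c_{j,2}$ reproduce $a_{\gamma,2}$ and $a_{\gamma,3}$ verbatim from Condition \ref{assp:Lyapunov-gamma}.

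With the variance identified, I would finish via the Lyapunov CLT applied to the triangular array $\{Z_i : i\in G_j\}$, giving $s_n^{-1}\sum_i Z_i \xrightarrow{d} N(0,1)$ provided $s_n^{-(2+\eta)}\sum_{i\in G_j}\bE|Z_i|^{2+\eta}\to 0$. Under Condition \ref{assp:ri} the $r_i$ are uniformly bounded, so $M_j=\sum_i r_i^2 - N_j\asymp N_j$ and hence $s_n^2\asymp N_j$; this makes $s_n^{2+\eta}$ comparable to $\gamma_j^{2+\eta}N_j^{1+\eta/2}$, so the stated Condition \ref{assp:Lyapunov-gamma} (noting $\sum_{s\ne t}\{d^2(Y_{is},Y_{it})-\bE d^2(Y_{is},Y_{it})\}=Z_i$) is exactly the Lyapunov ratio above. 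Writing $N_j^{1/2}(\widehat\rho_j-\rho_j) = (N_j^{1/2}s_n/M_j)\cdot(s_n^{-1}\sum_i Z_i)$ and applying Slutsky with $N_j^{1/2}s_n/M_j\to\gamma_j$ delivers $N_j^{1/2}(\widehat\rho_j-\rho_j)\xrightarrow{d} N(0,\gamma_j^2)$.

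I expect the only genuine obstacle to be the combinatorial bookkeeping in the variance decomposition: correctly enumerating the three overlap patterns, verifying both the multiplicities $2r_i(r_i-1)$, $4r_i(r_i-1)(r_i-2)$, $r_i(r_i-1)(r_i-2)(r_i-3)$ and that they sum to $[r_i(r_i-1)]^2$, and confirming that the normalized limits match $a_{\gamma,1},a_{\gamma,2},a_{\gamma,3}$. The CLT and Slutsky steps are then routine.
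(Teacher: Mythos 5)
Your proposal is correct and follows essentially the same route as the paper's proof: unbiasedness of $\widehat{\rho}_j$, a variance decomposition over the three index-overlap patterns with exactly the multiplicities $2r_i(r_i-1)$, $4r_i(r_i-1)(r_i-2)$, and $r_i(r_i-1)(r_i-2)(r_i-3)$, followed by the Lyapunov CLT applied to the independent subject-level sums $\sum_{s\ne t}d^2(Y_{is},Y_{it})$. Your added observations (the multiplicities summing to $[r_i(r_i-1)]^2$ and the explicit Slutsky step) are correct refinements of the same argument.
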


The asymptotic distributions of the sample Fr\'echet variance and within-subject variability provided in Theorems \ref{thm:asym-v} and \ref{th:rho} lead to the following asymptotic results for the auxiliary statistics.

\begin{theorem}\label{prop:tests}
Under the null hypothesis of equal Fr\'echet means, variances, and within-subject variabilities across groups, and under Conditions \ref{assp:unique}--\ref{assp:Lyapunov-gamma}, as \(n \to \infty\),
\begin{align*}
N^{1/2} D_n &= o_P(1), \\
\frac{N U_n}{\sum_{j=1}^k \widehat{\lambda}_j / \widehat{\sigma}_j^2} &\xrightarrow{d} \chi^2_{k-1}, \\
\frac{N R_n}{\sum_{j=1}^k \widehat{\lambda}_j / \widehat{\gamma}_j^2} &\xrightarrow{d} \chi^2_{k-1}.
\end{align*}
\end{theorem}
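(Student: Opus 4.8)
The plan is to prove the three limits separately: the mean-difference statement $N^{1/2}D_n = o_P(1)$ by an empirical-process / M-estimation argument built on Lemma~\ref{lem:unibound}, and the two quadratic-form limits by reducing $U_n$ and $R_n$ to Cochran-type forms in asymptotically independent Gaussians supplied by Theorems~\ref{thm:asym-v} and~\ref{th:rho}. First I would record what the null buys us. Writing $F_j(\omega)=\bE(d^2(\omega,Y_{il}))$ for $i\in G_j$, equality of the three functionals forces a common minimizer $\mu_1=\dots=\mu_k=:\mu$ and a common minimum value $V_1=\dots=V_k=:V$. Because each $F_j$ is uniquely and well-separatedly minimized at $\mu$ (Condition~\ref{assp:unique}), for any $\omega\ne\mu$ we have $F_j(\omega)>V$ for every $j$, so the pooled population criterion $\sum_j\widehat{\lambda}_jF_j$ is also uniquely minimized at $\mu$ with value $V$; hence $\widehat{\mu}_p\xrightarrow{P}\mu$ and the pooled Fréchet mean coincides with the common mean. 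I would state this carefully, since the null constrains only three functionals rather than the full group distributions, yet it is exactly what makes $D_n$ negligible.

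Setting $\widetilde{V}_j=N_j^{-1}\sum_{i\in G_j}\sum_l d^2(\mu,Y_{il})$ and $\widetilde{V}_p=N^{-1}\sum_{j,i,l}d^2(\mu,Y_{il})$ and using the exact identity $\widetilde{V}_p=\sum_j\widehat{\lambda}_j\widetilde{V}_j$, I would then write
\[
D_n=(\widehat{V}_p-\widetilde{V}_p)-\sum_j\widehat{\lambda}_j(\widehat{V}_j-\widetilde{V}_j).
\]
Each difference is the gap between the empirical Fréchet criterion at its minimizer and at $\mu$. Since $\widehat{\mu}_p$ (resp.\ $\widehat{\mu}_j$) minimizes the pooled empirical criterion $M_n^p(\omega)=N^{-1}\sum_{j,i,l}d^2(\omega,Y_{il})$ while $\mu$ minimizes its expectation, the standard M-estimation sandwich gives $0\le\widetilde{V}_p-\widehat{V}_p\le 2\sup_{\omega\in\mathcal{B}_\delta(\mu)}|M_n^p(\omega)-\bE M_n^p(\omega)|$ on the event $\{\widehat{\mu}_p\in\mathcal{B}_\delta(\mu)\}$, and likewise per group. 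Applying Lemma~\ref{lem:unibound} group-wise, and using $r_i\le C$ (Condition~\ref{assp:ri}) to get $\sqrt{\sum_{i\in G_j}r_i^2}\le C N_j^{1/2}$, yields $N^{1/2}\,\bE\sup_{\mathcal{B}_\delta(\mu)}|M_n^p-\bE M_n^p|\le C\,\delta\sum_j I_j(\delta)$, where $I_j(\delta)=\int_0^1\sqrt{\log\mathcal{N}(\mathcal{B}_\delta(\mu_j),d,2\delta\epsilon)}\,\dd\epsilon$. Choosing $\delta=\delta_n\to0$ slowly enough that $\widehat{\mu}_p,\widehat{\mu}_j\in\mathcal{B}_{\delta_n}(\mu)$ with probability tending to one (possible by consistency), Condition~\ref{assp:complex} forces $\delta_nI_j(\delta_n)\to0$, so Markov's inequality gives $N^{1/2}(\widehat{V}_p-\widetilde{V}_p)=o_P(1)$ and $N^{1/2}\widehat{\lambda}_j(\widehat{V}_j-\widetilde{V}_j)=o_P(1)$; summing yields $N^{1/2}D_n=o_P(1)$.

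For the $U_n$ limit I would first apply, with $c_j=\widehat{\lambda}_j/\widehat{\sigma}_j^2$ and $x_j=\widehat{V}_j-V$ (shift-invariance lets me recenter at the common $V$), the algebraic identity
\[
\sum_{j<l}c_jc_l(x_j-x_l)^2=\left(\sum_jc_j\right)\sum_jc_jx_j^2-\left(\sum_jc_jx_j\right)^2,
\]
which, since $\widehat{u}_{jl}=c_jc_l$ and $\sum_j\widehat{\lambda}_j/\widehat{\sigma}_j^2=\sum_jc_j$, turns the statistic into $\sum_jc_jW_j^2-(\sum_jc_j)^{-1}(\sum_jc_jW_j)^2$ with $W_j=N^{1/2}(\widehat{V}_j-V)$. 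By Theorem~\ref{thm:asym-v} and independence across groups, $S_j:=N_j^{1/2}(\widehat{V}_j-V)/\sigma_j\xrightarrow{d}N(0,1)$ jointly and independently, so $W_j=\widehat{\lambda}_j^{-1/2}\sigma_jS_j$; using consistency of $\widehat{\sigma}_j^2$ (Theorem~\ref{th:expression}) and Slutsky, $c_jW_j^2=(\sigma_j^2/\widehat{\sigma}_j^2)S_j^2\Rightarrow S_j^2$ while the cross term becomes the squared projection of $S=(S_1,\dots,S_k)$ onto the unit direction proportional to $(\widehat{\lambda}_j^{1/2}/\sigma_j)_j$. The statistic is therefore asymptotically $\|S\|^2-(\hat a^\top S)^2=\|(I-\hat a\hat a^\top)S\|^2$, the squared norm of a standard Gaussian projected onto a $(k-1)$-dimensional subspace, which is $\chi^2_{k-1}$ by Cochran's theorem. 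Crucially this limit is the same for every unit direction, so even if $\widehat{\lambda}_j=N_j/N$ does not converge I can use a rotation taking $\hat a$ to $e_1$ together with a subsequence argument over the compact orthogonal group to conclude $\chi^2_{k-1}$ for the whole sequence. The $R_n$ limit is identical, with $\widehat{\rho}_j,\gamma_j,\widehat{\gamma}_j$ and Theorem~\ref{th:rho} replacing $\widehat{V}_j,\sigma_j,\widehat{\sigma}_j$ and Theorem~\ref{thm:asym-v}.

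The main obstacle is the $N^{1/2}D_n=o_P(1)$ step: it is the only place where the metric-space geometry genuinely enters, through controlling the empirical Fréchet criterion uniformly over a shrinking ball and upgrading the M-estimation consistency of $\widehat{\mu}_p$ and $\widehat{\mu}_j$ to an $o_P(N^{-1/2})$ bound on the variance gap. The delicate point is coupling the radius $\delta_n$ in the covering-number bound of Lemma~\ref{lem:unibound} to the (unquantified) rate of $\widehat{\mu}_p\xrightarrow{P}\mu$, which I would resolve by letting $\delta_n\to0$ slowly and invoking Condition~\ref{assp:complex}; by contrast, once the group-wise central limit theorems are in hand, the $U_n$ and $R_n$ limits are essentially a deterministic quadratic-form computation combined with Slutsky's theorem.
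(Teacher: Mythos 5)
Your proposal is correct and follows essentially the same route as the paper: the $N^{1/2}D_n=o_P(1)$ claim is obtained from the $o_P(N_j^{-1/2})$ bound on $\frac{1}{N_j}\sum_{i\in G_j}\sum_l\{d^2(\widehat\mu_j,Y_{il})-d^2(\mu,Y_{il})\}$ via the M-estimation sandwich and the uniform empirical-process bound of Lemma~\ref{lem:unibound}, and the $U_n$ and $R_n$ limits come from the same algebraic identity reducing each statistic to a rank-$(k-1)$ projection of an asymptotically standard Gaussian vector, hence $\chi^2_{k-1}$. Your explicit recentering of $\widehat V_j$ at the common $V$ (justified by shift-invariance of the quadratic form) and your remark on consistency of the pooled Fréchet mean under the null are slightly more careful than the paper's presentation, but the substance is identical.
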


Before deriving the asymptotic distribution of the proposed statistic \(Q_n\), we introduce some auxiliary parameters. 
Let \(\lambda_j = \lim_{n \to \infty} \widehat{\lambda}_j\), \(s_\sigma = (\lambda_1^{1/2} / \sigma_1, \lambda_2^{1/2} / \sigma_2, \ldots, \lambda_k^{1/2} / \sigma_k)^\top\), and define the matrix \(A = I - s_\sigma s_\sigma^\top / (s_\sigma^\top s_\sigma)\), where \(I\) is the \(k \times k\) identity matrix. 
Similarly, let \(s_\gamma = (\lambda_1^{1/2} / \gamma_1, \lambda_2^{1/2} / \gamma_2, \ldots, \lambda_k^{1/2} / \gamma_k)^\top\), and define \(B = I - s_\gamma s_\gamma^\top / (s_\gamma^\top s_\gamma)\).
Let \(\xi_j = \Sigma_{jj} / (\sigma_j \gamma_j)\), where
\begin{align*}
\Sigma_{jj} = & \lim_{n_j \to \infty} \frac{\sum_{i \in G_j} r_i (r_i - 1)(r_i - 2)}{\sum_{i \in G_j} r_i^2 - N_j} \operatorname{Cov}\left( d^2(\mu_j, Y_{il}), d^2(Y_{is}, Y_{it}) \right) \\
& + 2 \operatorname{Cov}\left( d^2(\mu_j, Y_{il}), d^2(Y_{is}, Y_{il}) \right), \quad s, l, t \text{ are distinct}.
\end{align*}

The following theorem states that under the null hypothesis, \(Q_n\) converges in distribution to a weighted sum of chi-squared random variables.

\begin{theorem}\label{th:proptest}
Under the null hypothesis of equal population Fr\'echet means, variances, and within-subject variabilities, and under Conditions \ref{assp:unique}--\ref{assp:Lyapunov-gamma}, as \(n \to \infty\),
\[
Q_n \xrightarrow{d} \sum_{j=1}^{2k - 2} \phi_j \chi^2_1,
\]
where \(\phi_j > 0\) (for \(j = 1, \ldots, 2k - 2\)) are the positive eigenvalues of the matrix
\[
\begin{pmatrix}
A & A \operatorname{diag}(\xi_1, \ldots, \xi_k) B \\
B \operatorname{diag}(\xi_1, \ldots, \xi_k) A & B
\end{pmatrix},
\]
and \(\chi^2_1\) are independent chi-squared random variables with one degree of freedom.
\end{theorem}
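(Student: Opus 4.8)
The plan is to show that, under the null, $Q_n$ is asymptotically a quadratic form in a jointly Gaussian vector, and then to diagonalize that quadratic form. First I would dispose of the mean component: by Theorem~\ref{prop:tests}, $N^{1/2}D_n = o_P(1)$, and since $\widehat\sigma_j \to \sigma_j$ and $\widehat\lambda_j \to \lambda_j$ in probability the denominator $\sum_j \widehat\lambda_j^2\widehat\sigma_j^2$ converges to a positive constant; hence $N D_n^2/\sum_j\widehat\lambda_j^2\widehat\sigma_j^2 = o_P(1)$, and it suffices to analyze the joint limit of the two remaining terms.

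Next I would establish the joint asymptotic normality of the standardized vector $\mathbf{z}_n = (Z_{n,1}^V,\dots,Z_{n,k}^V,Z_{n,1}^\rho,\dots,Z_{n,k}^\rho)^\top$, where $Z_{n,j}^V = N_j^{1/2}(\widehat V_j - V)/\sigma_j$ and $Z_{n,j}^\rho = N_j^{1/2}(\widehat\rho_j - \rho)/\gamma_j$. The marginal limits are provided by Theorems~\ref{thm:asym-v} and~\ref{th:rho}. Because distinct groups consist of independent subjects, all cross-group covariances vanish, so the only nontrivial off-diagonal entries are the within-group covariances of $Z_{n,j}^V$ and $Z_{n,j}^\rho$. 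Linearizing $\widehat V_j$ around $\mu_j$ via the uniform bound of Lemma~\ref{lem:unibound} (which lets one replace $\widehat\mu_j$ by $\mu_j$ up to negligible order) and writing $\widehat\rho_j$ as an average of within-subject pairwise squared distances, the Cram\'er--Wold device combined with the Lyapunov Conditions~\ref{assp:Lyapunov-sigma} and~\ref{assp:Lyapunov-gamma} yields $\mathbf{z}_n \xrightarrow{d} \mathbf{z} \sim N(0,\Sigma)$ with $\Sigma = \begin{pmatrix} I & \Xi \\ \Xi & I \end{pmatrix}$, $\Xi = \operatorname{diag}(\xi_1,\dots,\xi_k)$, and $\xi_j = \Sigma_{jj}/(\sigma_j\gamma_j)$. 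This joint CLT, and in particular the identification of $\Sigma_{jj}$, is the main obstacle: the marginal theorems do not supply it, and one must carefully track the cross-covariance between the $M$-estimator--based $\widehat V_j$ and the pairwise average $\widehat\rho_j$, which is where the distinct-index covariance terms defining $\Sigma_{jj}$ arise.

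Then I would convert the two scaled statistics into quadratic forms in $\mathbf{z}$. Writing $y_j = N^{1/2}(\widehat V_j - V)$ and $w_j = \widehat\lambda_j/\widehat\sigma_j^2$, the elementary identity
\[
\sum_{j<l} w_j w_l (y_j - y_l)^2 = \Bigl(\sum_l w_l\Bigr)\Bigl(\sum_j w_j y_j^2\Bigr) - \Bigl(\sum_j w_j y_j\Bigr)^2
\]
gives $N U_n/\sum_l w_l = \sum_j w_j y_j^2 - (\sum_j w_j y_j)^2/\sum_l w_l$. Substituting $y_j = \sigma_j Z_{n,j}^V/\widehat\lambda_j^{1/2}$ and using $\widehat\sigma_j\to\sigma_j$, $\widehat\lambda_j\to\lambda_j$, Slutsky's theorem yields $NU_n/\sum_j\widehat\lambda_j/\widehat\sigma_j^2 \xrightarrow{d} \mathbf{z}_V^\top A\,\mathbf{z}_V$, where $\mathbf{z}_V=(Z_1^V,\dots,Z_k^V)^\top$ is the first block of $\mathbf{z}$ and $A = I - s_\sigma s_\sigma^\top/(s_\sigma^\top s_\sigma)$. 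The identical argument for $R_n$ gives $NR_n/\sum_j\widehat\lambda_j/\widehat\gamma_j^2 \xrightarrow{d}\mathbf{z}_\rho^\top B\,\mathbf{z}_\rho$ with $\mathbf{z}_\rho$ the second block, so by the continuous mapping theorem applied jointly, $Q_n \xrightarrow{d} \mathbf{z}_V^\top A\,\mathbf{z}_V + \mathbf{z}_\rho^\top B\,\mathbf{z}_\rho$.

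Finally I would diagonalize this limit. Since $A$ and $B$ are symmetric idempotent projections (onto the orthogonal complements of $s_\sigma$ and $s_\gamma$, each of rank $k-1$), we have $\mathbf{z}_V^\top A\mathbf{z}_V = \|A\mathbf{z}_V\|^2$ and $\mathbf{z}_\rho^\top B\mathbf{z}_\rho = \|B\mathbf{z}_\rho\|^2$. Setting $M=\operatorname{diag}(A,B)$, the limit equals $\|M\mathbf{z}\|^2$ with $M\mathbf{z}\sim N(0,\,M\Sigma M)$, and a direct block multiplication using $A^2=A$, $B^2=B$ gives
\[
M\Sigma M = \begin{pmatrix} A & A\Xi B \\ B\Xi A & B \end{pmatrix},
\]
exactly the stated matrix. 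For $\mathbf{w}\sim N(0,\Gamma)$ the spectral decomposition of $\Gamma$ gives $\|\mathbf{w}\|^2\overset{d}{=}\sum_j\phi_j\chi^2_{1}$ with $\{\phi_j\}$ the eigenvalues of $\Gamma$ and independent $\chi^2_1$ summands. Here $M\Sigma M=(\Sigma^{1/2}M)^\top(\Sigma^{1/2}M)$ is positive semidefinite, and since $M$ has rank $2k-2$ and $\Sigma$ is positive definite (the within-group correlations satisfy $|\xi_j|<1$), exactly $2k-2$ eigenvalues $\phi_j$ are positive, giving $Q_n\xrightarrow{d}\sum_{j=1}^{2k-2}\phi_j\chi^2_1$ as claimed.
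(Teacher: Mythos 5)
Your proposal is correct and follows essentially the same route as the paper's proof: dispose of the $D_n$ term via Theorem~\ref{prop:tests}, establish joint normality of the standardized variance and within-subject-variability vectors with cross-covariance $\operatorname{diag}(\xi_1,\dots,\xi_k)$ (using the linearization $\widehat\mu_j \to \mu_j$ to justify replacing $\widehat V_j$ by its $\mu_j$-based version in the covariance computation), rewrite $NU_n$ and $NR_n$ as quadratic forms in the projections $A$ and $B$, and read off the weighted chi-squared limit from the eigenvalues of the block covariance matrix. Your explicit remark that exactly $2k-2$ eigenvalues are positive (via $\operatorname{rank}(M)=2k-2$ and positive definiteness of $\Sigma$) is a small point the paper leaves implicit, but otherwise the two arguments coincide.
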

Here, the matrices $A$ and $B$ can be estimated by replacing $\lambda_j$, $\sigma_j$, $\gamma_j$ with their sample estimates $\widehat\lambda_j$, $\widehat\sigma_j$, $\widehat\gamma_j$, respectively.
The quantities \(\xi_j\) can be estimated using the finite sample estimator
\[
\widehat{\xi}_j = \frac{\widehat{\Sigma}_{jj} }{ \widehat{\sigma}_j \widehat{\gamma}_j },
\]
where
\begin{align*}
    \widehat{\Sigma}_{jj} = &\frac{1}{ \sum_{i \in G_j} r_i^2 - N_j } \Bigg\{ \sum_{i \in G_j} \left( \sum_{l=1}^{r_i} d^2(\widehat{\mu}_j, Y_{il}) \right) \left( \sum_{s \ne t} d^2(Y_{is}, Y_{it}) \right) \\
    &- \sum_{i \in G_j} r_i^2 (r_i - 1) \widehat{V}_j \widehat{\rho}_j \Bigg\}.
\end{align*}

We reject the null hypothesis of equal Fr\'echet means, variances, and within-subject variabilities when the test statistic \(Q_n\) exceeds the \((1 - \alpha)\)-quantile of the weighted chi-squared distribution \(\sum_{j=1}^{2k - 2} \phi_j \chi^2_1\). 
Denote this quantile by \(q_\alpha\); the rejection region is then
\[
\{ Q_n > q_\alpha \}.
\]

\subsection{Consistency}
\cite{dubey2019frechet} showed that the Fr{\'e}chet test for data without repeated measures is consistent against location and/or scale  alternatives.
Extending their arguments, we demonstrate that the generalized Fréchet test is consistent against any combination of location, scale, and within-subject variability alternatives.

\begin{theorem} \label{th:power}
Under Conditions \ref{assp:unique}-\ref{assp:Lyapunov-gamma}, the proposed test $Q_n$ is consistent against any combination of location, scale, and within-subject variability alternatives in the usual limiting regime.
\end{theorem}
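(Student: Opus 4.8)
The plan is to show that under any fixed alternative the statistic $Q_n$ diverges in probability, which suffices for consistency: the rejection threshold $q_\alpha$ is the fixed $(1-\alpha)$-quantile of the weighted chi-squared limit $\sum_{j=1}^{2k-2}\phi_j\chi^2_1$ from Theorem \ref{th:proptest}, a finite constant not depending on $n$. Since $Q_n$ is a sum of three nonnegative terms, $\PP(Q_n>q_\alpha)\to1$ will follow once I exhibit, for each type of alternative, a single term that tends to $+\infty$.

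First I would collect the relevant consistency facts, all of which remain valid under the alternative because the within-group assumptions (exchangeability and identical within-subject joint marginals) are unaffected by between-group differences. The central limit theorems in Theorems \ref{thm:asym-v} and \ref{th:rho} give $\widehat V_j\xrightarrow{P}V_j$ and $\widehat\rho_j\xrightarrow{P}\rho_j$; Theorem \ref{th:expression} gives $\widehat\sigma_j^2\xrightarrow{P}\sigma_j^2$ and $\widehat\gamma_j^2\xrightarrow{P}\gamma_j^2$; and $\widehat\lambda_j\to\lambda_j\in(0,1)$ in the limiting regime. Consequently the three denominators $\sum_j\widehat\lambda_j^2\widehat\sigma_j^2$, $\sum_j\widehat\lambda_j/\widehat\sigma_j^2$, and $\sum_j\widehat\lambda_j/\widehat\gamma_j^2$ converge to strictly positive finite constants (using $\sigma_j^2,\gamma_j^2\in(0,\infty)$), so the behavior of each term is governed entirely by its numerator $ND_n^2$, $NU_n$, or $NR_n$.

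The core of the argument is the location term. I would first show $\widehat V_p\xrightarrow{P}V_p:=\inf_{\omega}\sum_j\lambda_j\EE(d^2(\omega,Y_{il}))$ by the same M-estimation and uniform-bound machinery (Lemma \ref{lem:unibound}) used for Theorem \ref{thm:asym-v}, now applied to the pooled sample; together with $\sum_j\widehat\lambda_j\widehat V_j\xrightarrow{P}\sum_j\lambda_jV_j$ this yields $D_n\xrightarrow{P}D:=V_p-\sum_j\lambda_jV_j$, which is always nonnegative because $\widehat\mu_p$ minimizes the pooled objective. The key claim is that $D>0$ whenever the means differ. If $\mu_1\ne\mu_2$ with $d(\mu_1,\mu_2)=2\epsilon_0>0$, then by the triangle inequality every $\omega$ lies outside $\cB_{\epsilon_0}(\mu_1)$ or outside $\cB_{\epsilon_0}(\mu_2)$, so the well-separation in Condition \ref{assp:unique} forces $\sum_j\lambda_j\EE(d^2(\omega,Y_{il}))\ge\sum_j\lambda_jV_j+c$ for a constant $c>0$ uniform in $\omega$, whence $D\ge c>0$ after taking the infimum. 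Upgrading the merely pointwise strict inequality to this uniform gap via Condition \ref{assp:unique} is the main obstacle of the proof. The scale and within-subject terms are easier: by the continuous mapping theorem $U_n\xrightarrow{P}\sum_{j<l}\{\lambda_j\lambda_l/(\sigma_j^2\sigma_l^2)\}(V_j-V_l)^2$ and $R_n\xrightarrow{P}\sum_{j<l}\{\lambda_j\lambda_l/(\gamma_j^2\gamma_l^2)\}(\rho_j-\rho_l)^2$, each a strictly positive limit exactly when some pair of variances, respectively within-subject variabilities, differs.

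Finally I would assemble the three cases. Under any alternative at least one of the between-group differences in location, scale, or within-subject variability is present. In the first case $ND_n^2\to\infty$ because $D_n\to D>0$ and $N\to\infty$; in the second $NU_n\to\infty$; in the third $NR_n\to\infty$. In each case the corresponding term of $Q_n$ diverges while the other two remain nonnegative, so $Q_n\xrightarrow{P}\infty$ and hence $\PP(Q_n>q_\alpha)\to1$, establishing consistency against any combination of the three alternatives.
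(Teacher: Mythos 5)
Your proposal is correct and follows essentially the same route as the paper: both arguments show that $Q_n/N$ converges in probability to the strictly positive population quantity $D^2/\sum_j\lambda_j^2\sigma_j^2+U/\sum_j\lambda_j/\sigma_j^2+R/\sum_j\lambda_j/\gamma_j^2$ under any alternative, so that $Q_n$ diverges past the fixed critical value. Your additional argument that distinct Fr\'echet means force $D>0$ via the well-separation in Condition \ref{assp:unique} is a correct elaboration of a step the paper leaves implicit (it simply identifies the location alternative with $D\neq 0$).
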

The proof of this theorem is provided in the Supplementary Material.

\subsection{Asymptotics for distributional data}
A common example of random objects in biomedical studies is univariate distributional data, such as distributions of daily activity intensities. 
We consider the 2-Wasserstein distance for such distributional data. 
For two univariate distributions \(F_1\) and \(F_2\) with finite variances, the 2-Wasserstein distance \(d_W\) is given by
$
d_W^2(F_1, F_2) = \int_0^1 \left\{ F_1^{-1}(t) - F_2^{-1}(t) \right\}^2 \dd t,
$
where \(F_1^{-1}\) and \(F_2^{-1}\) are the quantile functions corresponding to \(F_1\) and \(F_2\), respectively.

In practice, the distributions \(Y_{il}\) are rarely observed directly. 
Instead, we have access to random observations \(X_{ilu}\) sampled from these distributions, that is,
\[
X_{ilu} \sim Y_{il}, \quad i = 1, \ldots, n; \quad l = 1, \ldots, r_i; \quad u = 1, \ldots, W_{il},
\]
where \(W_{il}\) is the number of observations for the \(l\)-th measurement of subject \(i\).
To handle the unobserved distributions, we apply nonparametric methods, such as empirical distributions or kernel density estimation \citep{petersen2016functional}, to obtain estimates \(Y_{il}^+\) of the distributions \(Y_{il}\).
Denote by \(\widehat{Q}_n\) the test statistic based on the estimated distributional data. 
Under the following additional regularity conditions, we can derive the asymptotic distribution of \(\widehat{Q}_n\) similar to Theorem \ref{th:proptest}. 
We leave the detailed derivations in the Supplementary Material.

\begin{condition}\label{assp:samplesize}
There is a uniform lower bound $\widetilde W$ such that $W_{il}\geq \widetilde W$ and $\widetilde W\rightarrow\infty$ as $n\rightarrow\infty$.
\end{condition}
\begin{condition}\label{assp:yplus}
There exists a compact interval \(\mathcal{I}\) such that the support of every distribution \(Y_{il}\) is contained within \(\mathcal{I}\). 
Moreover, there is a sequence \(b_N = o(N^{-1/2})\) such that
\[
\sup_{i, l} \sup_{\nu \in \mathcal{W}_2(\mathcal{I})} \bE \left(d_W(Y_{il}^+, Y_{il}) \big| Y_{il} = \nu \right) = O(b_N),
\]
where \(\mathcal{W}_2(\mathcal{I})\) denotes the space of one-dimensional distributions whose supports are contained in \(\mathcal{I}\), equipped with the 2-Wasserstein distance.
\end{condition}

\section{Simulation studies}\label{sec:simu}
We evaluate the performance of the proposed test statistic $Q_n$ under various simulation settings. 
Under each setting, we compare our results with those of the generalized edge-count test proposed by \cite{chen2017new} and the Fréchet test of \cite{dubey2019frechet}. 
Following the recommendations in \cite{chen2017new}, we use the 5-MST structure for the generalized edge-count test.
To the best of our knowledge, neither of these existing methods accommodates data with repeated measures, as both rely on between-subject distance metrics derived from a single observation per subject. 
Applying these tests directly to all repeated observations, without accounting for within-subject correlation, leads to inflated type 1 error rates (results omitted). 
To facilitate a fair comparison, we adapt these tests to the subject level by using the Fréchet mean of each subject's repeated measures as a single representative observation. 
We denote these modified versions of the generalized edge-count test and the Fréchet test as $aS$ and $aF$, respectively.

%The general setup for the simulation settings is as follows. 
We consider two groups of subjects, each with an equal size of $n_1=n_2=100$. 
%We also consider the scenario with a very small amount of repeated observations by sampling $r_i$ from $\{1,2\}$ with an equal probability of 1/2, with details provided in the Supplementary Material.
We examine the applicability of $Q_n$ on different types of random objects in a similar way as the simulation settings of \cite{dubey2019frechet}.
For subject $i$, there are $r_i$ repeated random objects.
We generate the $l$-th observation as one of the following: a univariate distribution $(Y_{il}^{(d)})$, a graph Laplacian $(Y_{il}^{(g)})$, a vector $(Y_{il}^{(v)})$, or a combination of these three types $(Y_{il})$. 
Below, we describe the specific data generation settings and present the corresponding results.

In the first setting, we generate $Y_{il}^{(d)}$ as a truncated normal distribution with mean $\theta_{il}$ and variance $\eta_i^2$ over $[-10,10]$. 
We use the 2-Wasserstein distance $d_W$ for these distributional data. 
Let $\eta_i \stackrel{\mathrm{iid}}{\sim} U(1,1.5)$, and let $\theta_{il}\sim N(a_i,1)$ with a subject-specific mean $a_i$.
We assume an exchangeable correlation structure for $\theta_{il}$:
\[
(\theta_{i1},\ldots,\theta_{ir_i})^T|\tilde{\mu}_i \sim N(\tilde{\mu}_i,\tilde{\rho}), \quad \tilde{\mu}_i=a_i1_{r_i}, \quad \tilde{\rho}=\iota 1_{r_i}1_{r_i}^T+(1-\iota)I_{r_i}.
\]
Here, $a_i \stackrel{\mathrm{iid}}{\sim} N(\beta,\epsilon^2)$. 
Thus, $Y_{il}^{(d)}$ depends on $\iota,\beta,\epsilon$.
To examine the type 1 error, we set $\iota=0.5$, $\beta=1$, $\epsilon=1$ for both groups, and set $r_i\equiv2$ for group 1. 
We consider different choices of the number of repeated measures for group 2 with $r_i\equiv r\in\{2,\dots,10\}$. To check the empirical power, we sample $r_i$ from $\{1,2,3\}$ with an equal probability of 1/3 for both groups. 
For group 1, we set $\iota=0.5$, $\beta=1$, $\epsilon=1$. 
For group 2, we vary the values of $\iota$, $\beta$, and $\epsilon$ separately, keeping the other parameters unchanged. We compute the empirical power of the tests for $0\leq\iota\leq 1$, $-1\leq\beta\leq 3$, and $0.5\leq\epsilon\leq 1.5$, corresponding to within-subject variability, Fréchet mean, and Fréchet variance differences, respectively.
Figure \ref{fig:simuden} presents the results.
The proposed test $Q_n$ maintains proper type 1 error control, while $aS$ exhibits slightly inflated type 1 error and $aF$ shows considerably inflated type 1 error when the number of repeated measures differs between groups. 
For within-subject variability differences, $Q_n$ is effective, whereas $aS$ and $aF$ have almost no power. 
All three tests perform similarly for detecting Fr\'echet mean differences. 
For Fr\'echet variance differences, $Q_n$ and $aF$ work well, with $aF$ slightly better, while $aS$ demonstrates the lowest power.

\begin{figure}[!h]
	\centering
	\includegraphics[width=\textwidth]{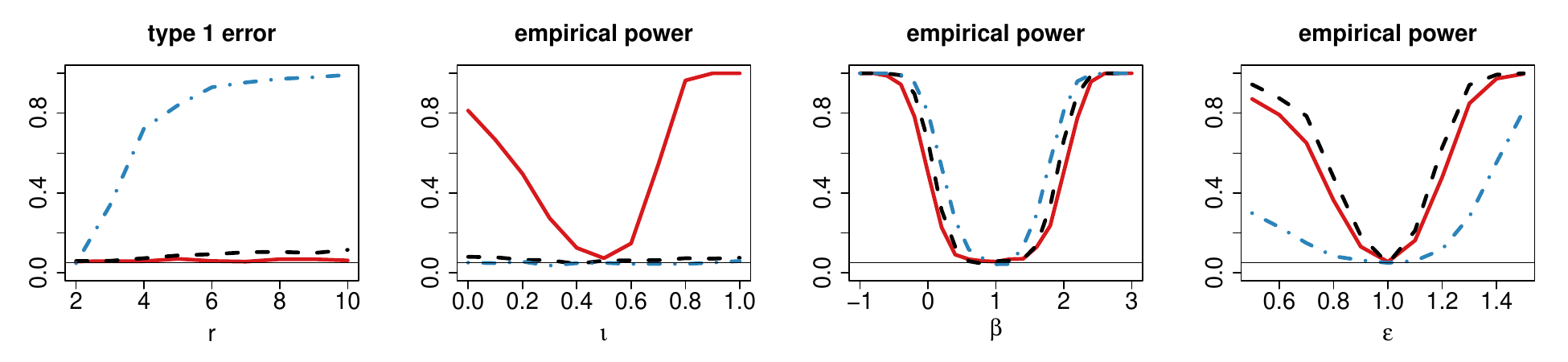}
	\caption{Empirical power for probability distributions $(Y_{il}^{(d)})$ as a function of $r$, $\iota$, $\beta$, and $\epsilon$ (from left to right). The horizontal line indicates the 0.05 significance level. 
	The solid red curve corresponds to the proposed test $Q_n$, and the dashed black and the dot-dashed blue curves represent $aF$ and $aS$, respectively.
	%Sample sizes of the groups are fixed at $n_1=n_2=100$.
	}\label{fig:simuden}
\end{figure}

In the second setting, we consider $Y_{il}^{(g)}$ as a graph Laplacian of a scale free network with 10 nodes. 
The graph Laplacian is defined as $K=E-A$, where $E$ is the degree matrix and $A$ is the adjacency matrix of the graph.
For two graph Laplacians $K_1$ and $K_2$, we use the Frobenius norm $d_F$, with $d_F^2(K_1,K_2)=\mathrm{trace}\{(K_1-K_2)^T(K_1-K_2)\}$.
We construct a scale-free network based on the Barabási-Albert model, where one node is added at each step and the new node forms edges with existing vertices.
For an existing node with $c$ degrees, the probability of connecting to the new node is approximately $P(c)\sim c^{-2\eta_i}$, where $\eta_i\stackrel{\mathrm{iid}}{\sim}U(1,1.5)$.
For each subject $i$, we generate a network $G_i$ using the `sample\_pa()' function from the R package `igraph'. 
We then uniformly sample the $l$-th network for subject $i$ from all possible networks that differ from $G_i$ by $\tau$ edges, making $Y_{il}^{(g)}$ depend on $\tau$.
To examine type 1 error, we set $\tau=3$ for both groups, and $r_i\equiv2$ for group 1. 
For group 2, we let  $r_i\equiv r\in\{2,\dots,10\}$.
To assess the empirical power, we sample $r_i$ from $\{1,2,3\}$ with equal probability for both groups. 
For group 1, we let $\tau=3$.
For group 2, we vary $\tau\in\{1,2,3,4,5\}$ and compute the empirical power to detect within-subject variability differences for network data.
Figure \ref{fig:simugraph} shows that the proposed test $Q_n$ demonstrates very high power with controlled type 1 error when $r_i \neq 2$, while $aF$ and $aS$ exhibit greatly inflated type 1 error and lower power for detecting within-subject variability differences.

\begin{figure}[!h]
	\centering
	\includegraphics[width=0.8\textwidth]{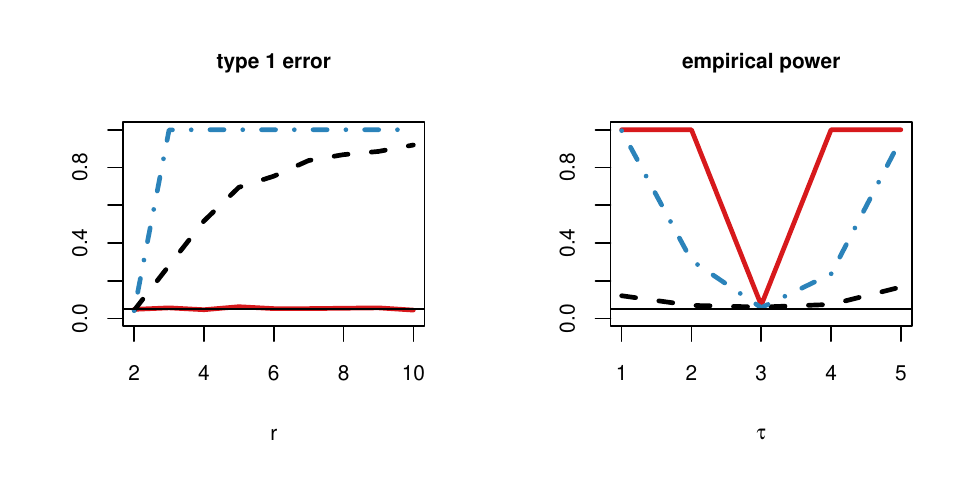}
	\caption{Empirical power for scale-free networks $(Y_{il}^{(g)})$ generated by the Barab{\'a}si-Albert model as a function of $r$ (left) and $\tau$ (right). The horizontal line indicates the 0.05 significance level.
	The solid red curve corresponds to the proposed test $Q_n$, and the dashed black and the dot-dashed blue curves represent $aF$ and $aS$, respectively.}\label{fig:simugraph}
\end{figure}

In the multivariate setting, we consider $Y_{il}^{(v)}$ as a 5-dimensional vector equipped with the $L_2$ norm $d_L^2(X_1,X_2)=(X_1-X_2)^T(X_1-X_2)$.
We assume an exchangeable correlation structure among $Y_{il}^{(v)}$, which gives
\[
(Y_{i1}^{(v)T},\dots,Y_{ir_i}^{(v)T})^T|\tilde\nu_{i}\sim N(\tilde\nu_{i},\tilde\rho\otimes I_5),
\]
where $\otimes$ denotes the Kronecker product, $\tilde\nu_{i}=a_{i}1_{5r_i}$ and $\tilde\rho=\iota 1_{r_i}1_{r_i}^T+(1-\iota)I_{r_i}$.
Here, $a_{i}\stackrel{\mathrm{iid}}{\sim}N(\beta,\epsilon^2)$, and thus $Y_{il}^{(v)}$ depends on $\iota$, $\beta$, $\epsilon$.
We examine the type 1 error and empirical under this setting in a manner similar to the first scenario and observe results consistent with those previously reported. 
Details are provided in the Supplementary Materials.

The last type of random object we study is $Y_{il}$, which consists of the three elements $(Y_{il}^{(d)},Y_{il}^{(g)},Y_{il}^{(v)})$ described above.
For two random objects $Y_s=(Y_s^{(d)},Y_s^{(g)},Y_s^{(v)})$, $s=1,2$, we consider the metric $d^2(Y_1,Y_2)=d_W^2(Y_1^{(d)},Y_2^{(d)})+d_F^2(Y_1^{(g)},Y_2^{(g)})+d_L^2(Y_1^{(v)},Y_2^{(v)})$.
To mimic common factors shared among different elements,
we use the same $\eta_i$ for $Y_{il}^{(d)}$ and $Y_{il}^{(g)}$, and the same $a_{i}$ and $\tilde\rho$ for $Y_{il}^{(d)}$ and $Y_{il}^{(v)}$.
Hence, $Y_{il}$ depends on the values of $\iota$, $\beta$, $\epsilon$, $\tau$.
To examine the type 1 error, we set $\iota=0.5$, $\beta=1$, $\epsilon=1$, $\tau=3$ for both groups, and $r_i\equiv2$ for group 1. 
For group 2, we consider $r_i\equiv r\in\{2,\dots,10\}$.
To check the empirical power, we sample $r_i$ from $\{1,2,3\}$ with equal probability for both groups. 
For group 1, we set $\iota=0.5$, $\beta=1$, $\epsilon=1$, $\tau=3$.
For group 2, we vary $\iota$, $\beta$, $\epsilon$, and $\tau$ seperately, keeping other parameters unchanged.
We compute empirical power for $0\leq\iota\leq 1$, $-1\leq\beta\leq 3$, and $0.5\leq\epsilon\leq 1.5$, and $\tau\in\{1,2,3,4,5\}$, corresponding to within-subject variability $(\iota,\tau)$, Fréchet mean $(\beta)$, and Fréchet variance $(\epsilon)$ differences, respectively.
Figure \ref{fig:simucom} shows that the proposed test $Q_n$ 
performs exceptionally well under all scenarios, maintaining controlled type 1 error. 
However, the tests $aF$ and $aS$ exhibit greatly inflated type 1 error when the number of repeated measures is unequal and show lower power for detecting within-subject variability differences.

\begin{figure}[!h]
	\centering
	\includegraphics[width=\textwidth]{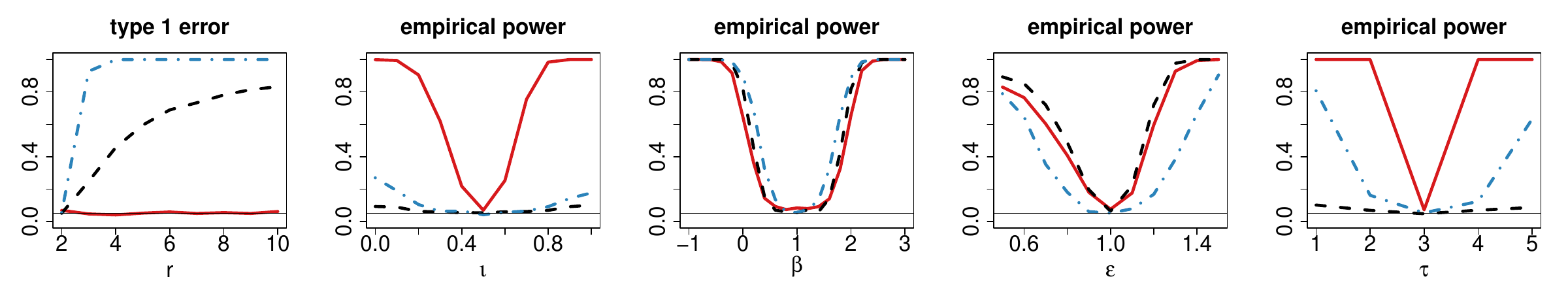}
	\caption{Empirical power for random object $(Y_{il})$ as a function of $r$, $\iota$, $\beta$, $\epsilon$, and $\tau$ (from left to right). The horizontal line indicates the 0.05 significance level. The solid red curve corresponds to the proposed test $Q_n$, and the dashed black and the dot-dashed blue curves represent $aF$ and $aS$, respectively.}\label{fig:simucom}
\end{figure}

\section{Real applications in biomedical studies}\label{sec:realanalysis}
In this section, we apply the proposed  method to two novel data structures: physical activity distributions and multivariate sleep-related outcomes from the iCOMPARE trial, as well as test-retest fMRI connectivity data collected at New York University Langone Medical Center.

To ensure the validity of comparisons, we consider potential limitations of existing methods when handling imbalanced repeated measures. 
The simulation results for $aS$ and $aF$ suggest that using averaged within-subject observations leads to inflated type 1 error rates when the number of repeated measures is imbalanced. 
To address this issue, we resample the data to ensure that each subject has the same number of repeated measures before applying $aS$ and $aF$. Additionally, since the test statistic $S_R$ proposed by \cite{zhang2022two} is designed for data with a balanced repeated measures structure, we also apply $S_R$ to the resampled balanced data. 
The proposed test $Q_n$ is applied directly to the entire dataset.

\subsection{Comparing physical activity distributions and sleep-related outcomes in the iCOMPARE trial}\label{sec:real-icompare}
In the iCOMPARE trial, 395 interns were randomly assigned to either flexible duty-hour programs or standard duty-hour programs \citep{basner2011validity,basner2019sleep}. 
Among them, 203 interns in the flexible duty-hour group worked extended overnight shifts most days and regular day/night shifts several days, while 192 interns in the standard duty-hour group worked regular day/night shifts. 
The study was approved by the Institutional Review Board of the University of Pennsylvania, and all participants provided written informed consent.
During the 14-day iCOMPARE trial, each intern wore an actigraph (model wGT3X-BT, ActiGraph) on the wrist of their non-dominant hand, recording minute-by-minute physical activity. 
Each morning, interns completed a brief smartphone survey that included a sleep log, a sleep quality score, the Karolinska Sleepiness Scale (KSS), and a brief psychomotor vigilance test (PVT). 
Total sleep minutes were derived from actigraphy and sleep logs. 
Thus, each subject had two sets of daily outcomes: minute-by-minute activity counts and a vector of sleep-related outcomes.
After excluding missing records, we retained data from 192 interns in the flexible duty-hour group (1674 daily observations) and 184 interns in the standard duty-hour group (1924 daily observations). 
The number of repeated measures (days) per intern ranged from 1 to 13, resulting in unequal numbers of repeated measures across subjects (Figure \ref{fig:example-icompare}C).

We aim to determine whether duty-hour policies induce significant differences between the two groups in terms of physical activity distributions (P1), sleep-related outcomes (P2), or both (P3).
For subject $i$ on day $l$, 
we characterize physical activity using the distribution of log-transformed activity counts, equipped with the 2-Wasserstein distance $d_W$. 
The sleep-related outcomes are represented by a 4-dimensional vector including sleep duration (hours), sleep quality, alertness (PVT response speed), and sleepiness (KSS), equipped with the $L_2$ norm.

To ensure a fair comparison of $aF$, $aS$, and $S_R$ while maintaining proper type 1 error control, we resample the data so that each intern has the same number of repeated measures. 
Specifically, we consider $r_i \equiv r \in \{10,11,12\}$. 
Interns with fewer than $r$ days are excluded, and for those with more than $r$ days, we randomly select $r$ days. 
This procedure reduces the number of interns in each group, as shown in Table \ref{tab:numicompare}.
To assess the variability in test results due to the choice of days, we repeat the resampling process 500 times. 
Let $p_j$ ($j=1,\ldots,500$) denote the $p$-values obtained from these 500 trials. 
We then combine these values into an overall $p$-value $\widehat{p}$ using the following formula:
\[
\widehat{p} = 1 - \frac{2}{1+\exp(2\theta)},
\]
where
\[
\theta = \frac{1}{500}\sum_{j=1}^{500}\frac{1}{2}\log\left(\frac{1+p_j}{1-p_j}\right).
\]

\begin{table}[!h]
\caption{Number of interns in each group when only considering interns with more than the given number of days}\label{tab:numicompare}
\centering
    \begin{tabular}{l|cc} 
        \hline
        $r$  & flexible duty-hour & standard duty-hour \\
        \hline
        10 & 136 & 90 \\
        11 & 124 & 68 \\
        12 & 100 & 44 \\ 
        \hline
    \end{tabular} 
\end{table}

Figure \ref{fig:icompare} presents boxplots of the $p$-values for $aF$, $aS$, and $S_R$, along with their corresponding overall $p$-values, under different choices for the balanced number of repeated measures. 
Results vary across values of $r$, and even for a fixed $r$, the tests can yield inconsistent conclusions: some $p$-values fall below 0.05 while others exceed 0.05.
Focusing on physical activity (P1) in Figure \ref{fig:icompare}(a), we find that $aS$ and $S_R$ consistently reject the null hypothesis at the 0.05 significance level, while $aF$ yields large, nonsignificant $p$-values for all choices of $r$. 
For sleep-related outcomes (P2), Figure \ref{fig:icompare}(b) shows that only $S_R$ under $r=10$ and $r=11$, and $aF$ under $r=12$, produce overall $p$-values less than 0.05, suggesting that duty-hour policies may not induce a robust, consistent difference in sleep. 
Finally, when considering both the activity distributions and sleep-related outcomes together (P3), Figure \ref{fig:icompare}(c) shows that only $S_R$ consistently rejects the null hypothesis at the 0.05 level.

\begin{figure}[!h]
	\centering
	\includegraphics[width=\textwidth]{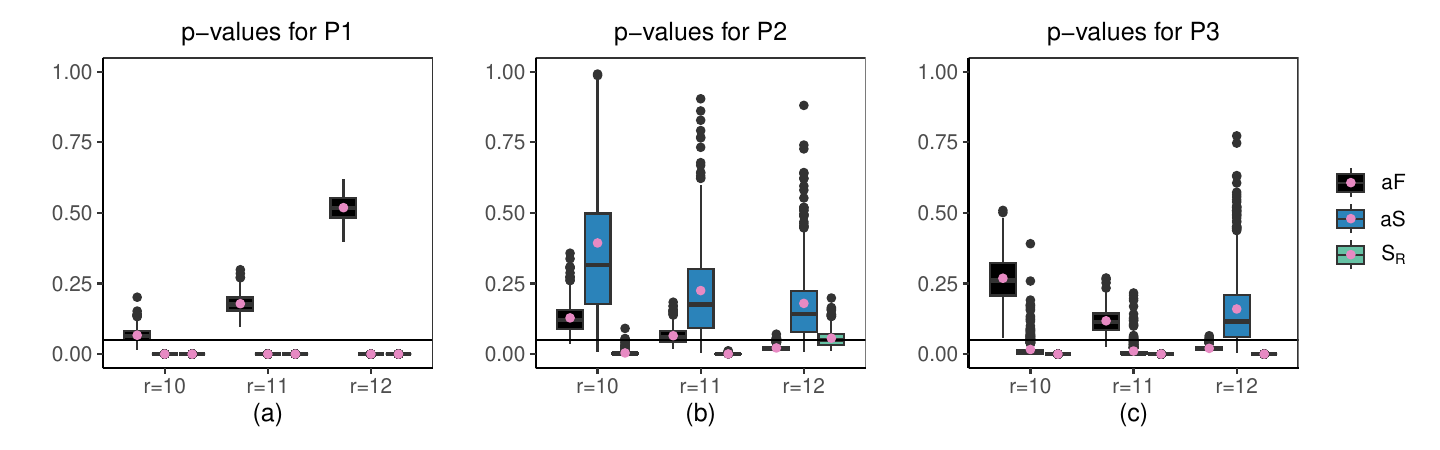}
	\caption{Comparison of physical activity (P1, panel a), sleep-related outcomes (P2, panel b), and both (P3, panel c) between interns in the flexible and standard duty-hour groups. Boxplots display the $p$-values from $aF$, $aS$, and $S_R$ for each comparison. Pink points indicate the overall $p$-values, and the horizontal line marks the 0.05 significance level.}\label{fig:icompare}
\end{figure}

By contrast, the proposed test $Q_n$ applied to the activity distributions (P1) yields an extremely small $p$-value (less than $10^{-3}$), strongly indicating that duty-hour policies significantly affect physical activity patterns. 
When applying $Q_n$ to the sleep-related outcomes (P2), the $p$-value is 0.002, showing that duty-hour policies also significantly influence sleep. 
Considering both the activity distributions and sleep-related outcomes together (P3), the proposed test $Q_n$ again rejects the null hypothesis with a $p$-value less than $10^{-3}$, demonstrating that both activity and sleep are strongly impacted by duty-hour policies.

\subsection{Comparing brain functional connectivity in different age groups}
It has been long reported that age is significantly associated with brain connectivity \citep{dosenbach2010prediction,baghernezhad2024age}. Understanding age-related changes in functional connectivity is essential for predicting brain maturity and identifying potential neurological disorders. 
However, in recent years, the relatively low test-retest reliability of commonly used fMRI-based connectome metrics has raised concerns about the reproducibility of relevant findings, and repeatedly measured neuroimaging data have become increasingly available \citep{anderson2011reproducibility,zuo2014test,zuo2014open,noble2017influences}.

We are interested in assessing whether brain connectivity changes with age with the presence of repeatedly measured neuroimaging data among neurotypical subjects. 
To explore this, we use repeatedly measured resting-state fMRI (rs-fMRI) data from the open-source Consortium for Reliability and Reproducibility (CoRR) (\url{https://fcon_1000.projects.nitrc.org/indi/CoRR/html/index.html}), a collaborative initiative aimed at sharing MRI data and establishing test-retest reliability and reproducibility for commonly used connectome metrics. 
Specifically, we downloaded the dataset from  187 neurotypical individuals (ages 6 to 55) acquired from New York University Langone Medical Center (\url{https://fcon_1000.projects.nitrc.org/indi/CoRR/html/nyu_2.html}). The subjects underwent multiple resting-state scans within several hours, resulting in a total of 415 rs-fMRI scans, with each subject having between 1 and 4  repeated scans (Figure \ref{fig:example-neuro}C).
We preprocess the rs-fMRI data using fMRIPrep \citep{esteban2019fmriprep}, which includes reference image estimation, head-motion correction, slice timing correction, susceptibility distortion correction, and registration to the MNI standard coordinate system. 
These steps ensure that the resulting fMRI volumes are aligned, standardized, and ready for subsequent analyses.

For this analysis, we focus on the 10 cortical hubs listed in Table 3 of \cite{buckner2009cortical}, commonly referred to as regions of interest (ROIs). These hubs include the left inferior/superior parietal lobule (L-IPL/SPL), medial superior frontal gyrus (Med-SFG), medial prefrontal cortex (Med-PFC), right inferior/superior parietal lobule (R-IPL/SPL), left middle frontal gyrus (L-MFG), posterior cingulate cortex/precuneus (PCC/PCu), right supramarginal gyrus (R-SMG), left middle temporal gyrus (L-MTG), right middle temporal gyrus (R-MTG), and right middle frontal gyrus (R-MFG). For each rs-fMRI image, we extract average fMRI signals from $3\times 3\times 3$ cubes centered on the seed voxels of these 10 hubs, yielding a $10\times 10$ correlation matrix based on the pairwise correlations among the ROIs.
Using these correlation matrices, we construct adjacency matrices for the 10 ROIs via minimum spanning trees, a method employed in various studies to investigate neurological conditions and brain network dynamics \citep{guo2017alzheimer,cui2018classification,blomsma2022minimum}. We then derive graph Laplacians from these adjacency matrices to represent the brain’s functional connectivity structure, and measure distances between them using the Frobenius norm.

Subjects are divided into two age groups: a young age group (age $\leq$ 20) and a middle age group (age $>$ 20). 
We have 111 subjects (229 graph Laplacians) in the young age group and 76 subjects (186 graph Laplacians) in the middle age group. 
Applying the proposed test $Q_n$ to the entire dataset yields a $p$-value of 0.039, suggesting moderate evidence that brain functional connectivity differs between the two age groups.

To apply $aF$, $aS$, and $S_R$ under balanced conditions, we set the number of repeated measures to $r_i\equiv 2$. 
Subjects with fewer than 2 scans are excluded, and for those with more than 2 scans, we randomly select 2. 
This subsetting process is repeated 500 times to assess the variability in test results. 
An overall $p$-value is estimated as described in the previous subsection.

\begin{figure}[!h]
	\centering
	\includegraphics[width=\textwidth]{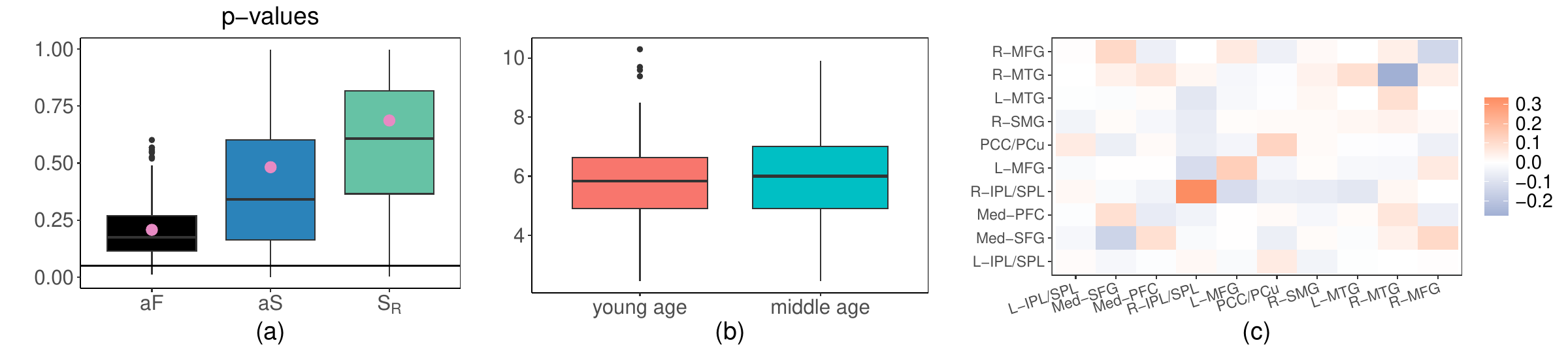}
	\caption{
	Comparison of brain functional connectivity between the young and middle age groups.
	(a) Boxplots of $p$-values from $aF$, $aS$, and $S_R$. 
	Pink points indicate the overall $p$-values, and the horizontal line marks the 0.05 significance level.
	(b) Boxplots of distances between within-subject repeated measures for the young and middle age groups.
	(c) The difference between the average graph Laplacians of the middle and young age groups.
	}\label{fig:neuro}
\end{figure}

Figure \ref{fig:neuro}(a) reveals that although $aF$, $aS$, and $S_R$ produce small $p$-values for certain subsamples, most of these $p$-values and their overall aggregated $p$-values remain large, failing to reject the null hypothesis. 
In contrast, the original data analysis using $Q_n$ indicate a significant difference.
To validate the differences detected by the original analysis, we examine the boxplots of distances between within-subject repeated measures for the young and middle age groups in Figure \ref{fig:neuro}(b). 
We also visualize the difference between the average graph Laplacians of the two age groups in Figure \ref{fig:neuro}(c). 
The exploratory plot reveals an increased average degree in task-related regions such as the R-IPL/SPL in middle-aged individuals compared to younger ones.
This increased connectivity may be attributed to the R-IPL/SPL's roles in attentional control, compensatory mechanisms, and the increased engagement of task-positive networks often observed as a compensatory response during middle age \citep{menardi2024brain}. 
Conversely, the plot shows a decreased average degree in regions such as the R-MTG for middle-aged individuals, which may reflect age-related declines in semantic memory, language processing, and integration within the default mode network \citep{rajah2011age,xu2022structural}.
Together, these figures clearly illustrate distinct patterns in brain functional connectivity between young and middle-aged individuals, demonstrating that the proposed test $Q_n$ can detect meaningful differences in brain networks that other methods may miss.

\section{Conclusion and discussion}\label{sec:discuss}
We consider the testing problem for general metric space-valued data with unequal numbers of repeated measures, which are increasingly encountered in biomedical studies. 
Although some existing nonparametric tests can deal with non-Euclidean objects, testing methods for repeated measures data, particularly with unequal numbers of repeats, remain limited. 
Our numerical studies indicate that simply averaging within-subject measures inflates type 1 error rates, while constraining analyses to subjects with the same number of repeated measures often yields inconsistent results and potentially controversial conclusions.

To address these challenges, we extend the Fréchet test to repeated random objects by explicitly accounting for within-subject dependency. 
We introduce the concept of within-subject variability and propose a generalized Fréchet test that can detect differences in location, scale, and within-subject variability simultaneously. 
Real-world applications, including data from the iCOMPARE trial and neuroimaging studies, demonstrate that our test offers a robust tool for analyzing complex biomedical data, thereby enabling more accurate group comparisons in studies involving repeated measurements.
The within-subject dependency complicates theoretical extensions of existing frameworks, making a direct application of the theory from \cite{dubey2019frechet} inapplicable. 
Here, we employ empirical process techniques to derive the asymptotic weighted chi-squared distribution of our proposed test statistic, which is a nontrivial adaptation beyond the methods in \cite{dubey2019frechet}.

In this paper, we focus on the commonly encountered exchangeable repeated measures structure. 
However, situations with more complex correlation patterns may also occur in practice. 
In such settings, the expected squared distance between within-subject repeated measures could depend on their temporal spacing, necessitating alternative formulations of within-subject variability. 
For instance, under a decaying within-subject correlation structure, one might consider
\[
\bE(d^2(Y_{is}, Y_{it})) = 1/\rho_j^{|s-t|}, \quad i \in G_j, \quad s \neq t,
\]
where $\rho_j\in(0,1]$, to account for  these dependencies. 
Exploring such scenarios would be an interesting topic for future research.

\section*{APPENDIX}

\appendix
\renewcommand{\theequation}{\thesection.\arabic{equation}}

\setcounter{equation}{0}

\section{Proofs of theorems and lemmas}
\subsection{Proof of Theorem \ref{th:expression}}
\begin{proof}
In the following, we only prove the consistency of $\widehat\sigma_j^2$. 
The consistency of $\widehat\gamma_j^2$ can be proved similarly and is omitted.

Note that $\widehat\sigma_j^2$ can be reorganized in the following form
\begin{align*}
\widehat\sigma_j^2 = & \frac{1}{N_j}\sum_{i\in G_j}\sum_{l=1}^{r_i}d^4(\widehat\mu_j,Y_{il})-\frac{\sum_{i \in G_j} r_i^2}{N_j}\left(\frac{1}{N_j}\sum_{i\in G_j}\sum_{l=1}^{r_i}d^2(\widehat\mu_j,Y_{il})\right)^2 \\
& + \frac{1}{N_j}\sum_{i\in G_j}\sum_{l\neq k}d^2(\widehat\mu_j,Y_{il})d^2(\widehat\mu_j,Y_{ik}).
\end{align*}
We will prove the three parts on the right hand are consistent estimators of their corresponding expectations.

(i)
\begin{align*}
& \left|\frac{1}{N_j}\sum_{i\in G_j}\sum_{l=1}^{r_i}d^4(\widehat\mu_j,Y_{il})-\bE (d^4(\mu_j,Y_{il}))\right| \\
= & \left|\frac{1}{N_j}\sum_{i\in G_j}\sum_{l=1}^{r_i}\left(d^4(\widehat\mu_j,Y_{il})-d^4(\mu_j,Y_{il})+d^4(\mu_j,Y_{il})\right)-\bE(d^4(\mu_j,Y_{il}))\right| \\
\leq & 4\mathrm{diam}^3(\Omega)d(\widehat\mu_j,\mu_j)+\left|\frac{1}{N_j}\sum_{i\in G_j}\sum_{l=1}^{r_i}d^4(\mu_j,Y_{il})-\bE(d^4(\mu_j,Y_{il}))\right| \\
=& o_P(1).
\end{align*}

(ii)
\begin{align*}
& \left|\frac{1}{N_j}\sum_{i\in G_j}\sum_{l=1}^{r_i}d^2(\widehat\mu_j,Y_{il})-\bE(d^2(\mu_j,Y_{il}))\right| \\
= & \left|\frac{1}{N_j}\sum_{i\in G_j}\sum_{l=1}^{r_i}\left(d^2(\widehat\mu_j,Y_{il})-d^2(\mu_j,Y_{il})+d^2(\mu_j,Y_{il})\right)-\bE(d^2(\mu_j,Y_{il}))\right| \\
\leq & 2\mathrm{diam}(\Omega)d(\widehat\mu_j,\mu_j)+\left|\frac{1}{N_j}\sum_{i\in G_j}\sum_{l=1}^{r_i}d^2(\mu_j,Y_{il})-\bE(d^2(\mu_j,Y_{il}))\right| \\
=& o_P(1).
\end{align*}

(iii)
\begin{equation}\label{eq:diffcov}
    \begin{aligned}
&\left|\frac{1}{\sum_{i\in G_j}r_i(r_i-1)}\sum_{i\in G_j}\sum_{l\neq k}d^2(\widehat\mu_j,Y_{il})d^2(\widehat\mu_j,Y_{ik})-\bE(d^2(\mu_j,Y_{il})d^2(\mu_j,Y_{ik}))\right|\\
&= |A_1+A_2|,
\end{aligned}
\end{equation}
where 
\[
A_1 = \frac{1}{\sum_{i\in G_j}r_i(r_i-1)}\sum_{i\in G_j}\sum_{l\neq k}\left(d^2(\widehat\mu_j,Y_{il})d^2(\widehat\mu_j,Y_{ik})-d^2(\mu_j,Y_{il})d^2(\mu_j,Y_{ik})\right), 
\]
\[
A_2 = \frac{1}{\sum_{i\in G_j}r_i(r_i-1)}\sum_{i\in G_j}\sum_{l\neq k}d^2(\mu_j,Y_{il})d^2(\mu_j,Y_{ik})-\bE(d^2(\mu_j,Y_{il})d^2(\mu_j,Y_{ik})).
\]
Since 
\begin{align*}
|A_1|\leq & \frac{2\mathrm{diam}^2(\Omega)}{\sum_{i\in G_j}r_i(r_i-1)}\sum_{i\in G_j}\sum_{l\neq k}\left|d(\widehat\mu_j,Y_{il})d(\widehat\mu_j,Y_{ik})-d(\mu_j,Y_{il})d(\mu_j,Y_{ik})\right|\\
= & \frac{2\mathrm{diam}^2(\Omega)}{\sum_{i\in G_j}r_i(r_i-1)}\sum_{i\in G_j}\sum_{l\neq k}\big|d(\widehat\mu_j,Y_{il})d(\widehat\mu_j,Y_{ik})-d(\widehat\mu_j,Y_{il})d(\mu_j,Y_{ik}) \\
& +d(\widehat\mu_j,Y_{il})d(\mu_j,Y_{ik})-d(\mu_j,Y_{il})d(\mu_j,Y_{ik})\big|\\
\leq & 4\mathrm{diam}^3(\Omega)d(\widehat\mu_j,\mu_j)\\
= & o_P(1)
\end{align*}
and $|A_2|=o_P(1)$, the term in \eqref{eq:diffcov} is $o_P(1)$.
Thus, we have $\widehat\sigma_j^2-\sigma_j^2=o_P(1)$.
\end{proof}

\subsection{Proof of Lemma \ref{lem:unibound}}
\begin{proof}
Define 
\[
f_\omega(Y_{ij}) = d^2(\omega,Y_{ij}),
\]
\[
F_\omega(Y_{ij}) = \frac{1}{N_j}\sum_{i\in G_j}\sum_{j=1}^{r_i}f_\omega(Y_{ij}) - \bE(f_\omega(Y_{ij})).
\]
We control the behavior of $F_\omega(Y_{ij})$ uniformly for small $d(\omega,\mu_j)$.
Let 
\[
\cF_j=\{f_\omega(Y):\omega\in\cB_\delta(\mu_j)\},~\cB_\delta(\mu_j)=\{\omega\in \Omega:d(\omega,\mu_j)<\delta\}.
\]
Using Lemma \ref{lemma: symmetric-empirical-process} and Theorem \ref{thm: Dudley}, we have 
\[
\bE\left(\sup_{f_\omega\in\cF_j}|F_\omega(Y_{ij})|\right)\leq C\int_0^D\frac{\sqrt{\log \cN(\cF_j,L^2(\PP_n),t)(\sum_{i\in G_j}r_i^2)}}{N_j}\dd t,
\]
where 
\begin{align*}
D = &\sup_{f,f'\in\cF_j}\|f-f'\|_{L^2} \\
= &\sup_{f,f'\in\cF_j}\sqrt{\frac{1}{N_j}\sum_{i\in G_j}\sum_{j=1}^{r_i}(d^2(\omega,Y_{ij})-d^2(\omega',Y_{ij}))^2} \leq 4\mathrm{diam}(\Omega)\delta.
\end{align*}
Since for $f\in\cF_j$, 
\[
\|f_{\omega_1}-f_{\omega_2}\|_{L^2} \leq 2\mathrm{diam}(\Omega)d(\omega_1,\omega_2),
\]
we have
\[
\cN(\cF_j,L^2(\PP_n),t) \leq \cN(\cB_\delta(\mu),d,\frac{t}{2\mathrm{diam}(\Omega)}).
\]
Hence,
\begin{align*}
E\left(\sup_{f_\omega\in\cF_j}|F_\omega(Y_{ij})|\right)\leq & C\int_0^{4\mathrm{diam}(\Omega)\delta}\frac{1}{N_j}\sqrt{\log \cN(\cB_\delta(\mu),d,\frac{t}{2\mathrm{diam}(\Omega)})(\sum_{i\in G_j}r_i^2)}\dd t\\
=& C\frac{\mathrm{diam}(\Omega)\delta\sqrt{\sum_{i\in G_j}r_i^2}}{N_j}\int_0^1\sqrt{\log\cN(\cB_\delta(\mu),d,2\delta\epsilon)}\dd\epsilon,
\end{align*}
where the value of $C$ may change from line to line.
\end{proof}

\subsection{Proof of Theorem \ref{thm:asym-v}}
\begin{proof}
First, using Lemma \ref{lem:unibound} and under Conditions \ref{assp:unique}-\ref{assp:complex}, we will show that
\begin{equation}\label{prop:diff}
\frac{1}{N_j}\sum_{i\in G_j}\sum_{l=1}^{r_i}\left\{d^2(\widehat\mu_j,Y_{il})-d^2(\mu_j,Y_{il})\right\} = o_P(N_j^{-1/2}).
\end{equation}
That is, we want to prove that for any $\epsilon>0$, $\gamma>0$, there exists $M$ such that for all $N_j\geq M$,
\begin{equation}\label{eq:prop1}
P\left[\left|N_j^{-1/2}\sum_{i\in G_j}\sum_{l=1}^{r_i}\{d^2(\widehat\mu_j,Y_{il})-d^2(\mu_j,Y_{il})\}\right|>\epsilon\right]<\gamma.
\end{equation}
For any small $\delta>0$,
\begin{equation}\label{eq:diff}
\begin{aligned}
 &P\left[\left|N_j^{-1/2}\sum_{i\in G_j}\sum_{l=1}^{r_i}\{d^2(\widehat\mu_j,Y_{il})-d^2(\mu_j,Y_{il})\}\right|>\epsilon\right] \\
&\leq P\left[\left|\frac{1}{N_j}\sum_{i\in G_j}\sum_{l=1}^{r_i}\{d^2(\widehat\mu_j,Y_{il})-d^2(\mu_j,Y_{il})\}\right|>\epsilon N_j^{-1/2}, d(\widehat\mu_j,\mu_j)<\delta\right] \\ 
&\quad\qquad\qquad\qquad+P\{d(\widehat\mu_j,\mu_j)\geq\delta\} . 
\end{aligned} 
\end{equation}
To deal with the first part of \eqref{eq:diff}, we note that when $d(\widehat\mu_j,\mu_j)<\delta$,
\begin{align*}
&\bE(d^2(\widehat\mu_j,Y_{ij})) \\
= & \bE(d^2(\widehat\mu_j,Y_{ij})) - \frac{1}{N_j}\sum_{i\in G_j}\sum_{l=1}^{r_i}d^2(\widehat\mu_j,Y_{ij}) + \frac{1}{N}\sum_{i\in G_j}\sum_{l=1}^{r_i}d^2(\widehat\mu_j,Y_{ij}) \\
\leq & \sup_{f_\omega\in\cF_j}|F_\omega(Y_{ij})| + \frac{1}{N}\sum_{i\in G_j}\sum_{l=1}^{r_i}d^2(\mu_j,Y_{ij}) - \bE(d^2(\mu_j,Y_{ij})) + \bE(d^2(\mu_j,Y_{ij}))\\
\leq & 2\sup_{f_\omega\in\cF_j}|F_\omega(Y_{ij})|+ \bE(d^2(\mu_j,Y_{ij})). 
\end{align*}
We have 
\[
0\leq \bE(d^2(\widehat\mu_j,Y_{ij})) - \bE(d^2(\mu_j,Y_{ij}))\leq  2\sup_{f_\omega\in\cF}|F_\omega(Y_{ij})|,
\]
and 
\begin{align*}
& \left|\frac{1}{N_j}\sum_{i\in G_j}\sum_{l=1}^{r_i}d^2(\widehat\mu_j,Y_{il})-d^2(\mu_j,Y_{il})\right| \\
= & \Bigg|\frac{1}{N_j}\sum_{i\in G_j}\sum_{l=1}^{r_i}d^2(\widehat\mu_j,Y_{il})-\bE(d^2(\widehat\mu_j,Y_{il}))+\bE(d^2(\widehat\mu_j,Y_{il}))-\bE(d^2(\mu_j,Y_{il}))\\
&+\bE(d^2(\mu_j,Y_{il}))- \frac{1}{N_j}\sum_{i\in G_j}\sum_{l=1}^{r_i}d^2(\mu_j,Y_{il})\Bigg|\\
\leq & 4\sup_{f_\omega\in\cF_j}|F_\omega(Y_{ij})|
\end{align*}
for $d(\widehat\mu_j,\mu_j)<\delta$.
Hence,
\begin{align*}
& P\left[\left|\frac{1}{N_j}\sum_{i\in G_j}\sum_{l=1}^{r_i}\{d^2(\widehat\mu_j,Y_{il})-d^2(\mu_j,Y_{il})\}\right|>\epsilon N_j^{-1/2}, d(\widehat\mu_j,\mu_j)<\delta\right]  \\
\leq & P\left[4\sup_{f_\omega\in\cF_j}|F_\omega(Y_{ij})|>\epsilon N_j^{-1/2}\right]  \\
\leq & C\bE\left(\sup_{f_\omega\in\cF_j}|F_\omega(Y_{ij})|\right)\frac{N_j^{1/2}}{\epsilon}  \\
\leq & C\sqrt{\frac{\sum_{i\in G_j}r_i^2}{N_j}}\frac{\mathrm{diam}(\Omega)\delta}{\epsilon}J(\delta)\\
= & \frac{C\delta J(\delta)}{\epsilon}
\end{align*}
where $J(\delta)=\int_0^1\sqrt{\log\cN(\cB_\delta(\mu_j),d,2\delta\epsilon)}\dd\epsilon$ and the value of $C$ may change from line to line.
For any small $\delta>0$ such that $c\delta J(\delta)/\epsilon<\gamma/2$, the expression above can be further bounded by $\gamma/2$.
For any such $\delta$, using the consistency of Fr{\'e}chet mean $\widehat\mu_j$ it is possible to choose $M$ such that the second part of \eqref{eq:diff} can be bounded by $\gamma/2$ for all $N_j\geq M$. Therefore, \eqref{eq:prop1} holds.

Then we have
\[
N_j^{1/2}(\widehat V_j-V_j) = A_1+A_2,
\]
where 
\[
A_1 = N_j^{1/2}\frac{1}{N_j}\sum_{i\in G_j}\sum_{l=1}^{r_i}\{d^2(\widehat\mu_j,Y_{il})-d^2(\mu_j,Y_{il})\}=o_P(1) 
\]
by \eqref{prop:diff}, and 
\[
A_2 = N_j^{1/2}\left[\frac{1}{N_j}\sum_{i\in G_j}\sum_{l=1}^{r_i}d^2(\mu_j,Y_{il})-E\{d^2(\mu_j,Y_{il})\}\right].
\]
Since $\bE(A_2)=0$ and
\begin{align*}
\var(A_2) = & \frac{1}{N_j}\var\left(\sum_{i\in G_j}\sum_{l=1}^{r_i}d^2(\mu_j,Y_{il})\right) \\
= & \frac{1}{N_j}\sum_{i\in G_j}\var\left(\sum_{l=1}^{r_i}d^2(\mu_j,Y_{il})\right) \\
= & \frac{1}{N_j}\sum_{i\in G_j}r_i\var(d^2(\mu_j,Y_{il}))+r_i(r_i-1)\cov(d^2(\mu_j,Y_{il}),d^2(\mu_j,Y_{ik}))\\
= & \var(d^2(\mu_j,Y_{il})) + \left(\sum_{i\in G_j}r_i^2/N_j-1\right)\cov(d^2(\mu_j,Y_{il}),d^2(\mu_j,Y_{ik})),
\end{align*}
the term $A_2$ converges in distribution
to $N(0,\sigma^2)$ by applying the Lyapunov Central Limit Theorem under Condition \ref{assp:Lyapunov-sigma} to the independent random variables $\sum_{l=1}^{r_i}d^2(\mu_j,Y_{il})$, $i\in G_j$. Theorem \ref{thm:asym-v} then follows directly from Slutsky's Theorem. 
\end{proof}

\subsection{Proof of Theorem \ref{th:rho}}
\begin{proof}
Since $E(\widehat\rho_j) = \rho_j$ and
\begin{align*}
& var(\widehat\rho_j) \\
= & \frac{1}{(\sum_{i\in G_j}r_i^2-N_j)^2}\sum_{i\in G_j}var(\sum_{s\neq t}d^2(Y_{is},Y_{it})) \\
= & \frac{1}{(\sum_{i\in G_j}r_i^2-N_j)^2}\sum_{i\in G_j}\Bigg\{\sum_{s\neq t}var(d^2(Y_{is},Y_{it})) + \sum_{\text{different }s,t,q}cov(d^2(Y_{is},Y_{it}),d^2(Y_{is},Y_{iq})) \\
& + \sum_{\text{different }s,t,p,q}cov(d^2(Y_{is},Y_{it}),d^2(Y_{ip},Y_{iq})) \Bigg\} \\
= & \frac{1}{(\sum_{i\in G_j}r_i^2-N_j)^2}\sum_{i\in G_j}\Bigg\{4r_i(r_i-1)(r_i-2)cov(d^2(Y_{is},Y_{it}),d^2(Y_{is},Y_{iq})) \\
& +r_i(r_i-1)(r_i-2)(r_i-3)cov(d^2(Y_{is},Y_{it}),d^2(Y_{ip},Y_{iq})) \\
& +2r_i(r_i-1)var(d^2(Y_{is},Y_{it})) \Bigg\} \\
= & \frac{\sum_{i\in G_j}r_i(r_i-1)(r_i-2)(r_i-3)}{(\sum_{i\in G_j}r_i^2-N_j)^2}cov(d^2(Y_{is},Y_{it}),d^2(Y_{ip},Y_{iq}))\\
& + \frac{4\sum_{i\in G_j}r_i(r_i-1)(r_i-2)}{(\sum_{i\in G_j}r_i^2-N_j)^2}cov(d^2(Y_{is},Y_{it}),d^2(Y_{is},Y_{iq})) \\
& + \frac{2}{\sum_{i\in G_j}r_i^2-N_j}var(d^2(Y_{is},Y_{it})),
\end{align*}
$N_j^{1/2}(\widehat\rho_j-\rho_j)$ converges in distribution
to $N(0,\gamma_j^2)$ by applying the Lyapunov Central Limit Theorem under Assumption \ref{assp:Lyapunov-gamma} to the independent random variables $\sum_{s\neq t}d^2(Y_{is},Y_{it})$, $i\in G_j$.
\end{proof}

\subsection{Proof of Proposition \ref{prop:tests}}
\begin{proof}
(1) Under the null hypothesis of equal population Fr{\'e}chet mean,
\[
\mu_1=\mu_2=\dots=\mu_k=\mu.
\]
We have
\begin{align*}
&N^{1/2}D_n \\
= & N^{-1/2}\sum_{i=1}^n\sum_{l=1}^{r_i}\left(d^2(\widehat\mu_p,Y_{il})-d^2(\mu,Y_{il})\right)- N^{-1/2}\sum_{j=1}^k\sum_{i\in G_j}\sum_{l=1}^{r_i}\left(d^2(\widehat\mu_j,Y_{il})-d^2(\mu,Y_{il})\right)\\
=&o_P(1)
\end{align*}
by \eqref{prop:diff}.

(2) Under the null hypothesis of equal population Fr{\'e}chet variance,
\[
V_1=V_2=\dots=V_k=V.
\]
\begin{align*}
NU_n = & N\sum_{j<l}\frac{\widehat\lambda_j\widehat\lambda_l}{\widehat\sigma_j^2\widehat\sigma_l^2}(\widehat V_j-\widehat V_l)^2 \\
= & N\left\{\sum_{j=1}^k\frac{\widehat\lambda_j}{\widehat\sigma_j^2}\widehat V_j^2\left(\sum_{l=1}^k\frac{\widehat\lambda_l}{\widehat\sigma_l^2}\right)-\left(\sum_{j=1}^k\frac{\widehat\lambda_j}{\widehat\sigma_j^2}\widehat V_j\right)^2\right\} \\
= & N\left(\sum_{l=1}^k\frac{\widehat\lambda_l}{\widehat\sigma_l^2}\right)\widehat V^T\left(\widehat\Lambda_\sigma-\frac{\widehat\lambda_\sigma\widehat\lambda_\sigma^T}{\sum_{j=1}^k\frac{\widehat\lambda_j}{\widehat\sigma_j^2}}\right)\widehat V,
\end{align*}
where $\widehat V=(\widehat V_1,\widehat V_2,\dots,\widehat V_k)^T$, $\widehat\lambda_\sigma=(\widehat\lambda_1/\widehat\sigma_1^2,\widehat\lambda_2/\widehat\sigma_2^2,\dots,$  $\widehat\lambda_k/\widehat\sigma_k^2)^T$, $\widehat\Lambda_\sigma=diag(\widehat\lambda_\sigma)$.
Let $Z_n = N^{1/2}\widehat\Lambda_\sigma^{1/2}\widehat V$ and $\widehat s_\sigma=(\widehat\lambda_1^{1/2}/\widehat\sigma_1,\widehat\lambda_2^{1/2}/\widehat\sigma_2,\dots,\widehat\lambda_k^{1/2}/\widehat\sigma_k)^T$.
Therefore,
\begin{align}\label{eq:nu}
\frac{NU_n}{\sum_{j=1}^k\widehat\lambda_j/\widehat\sigma_j^2} = N\widehat V^T\left(\widehat\Lambda_\sigma-\frac{\widehat\lambda_\sigma\widehat\lambda_\sigma^T}{\sum_{j=1}^k\frac{\widehat\lambda_j}{\widehat\sigma_j^2}}\right)\widehat V = Z_n^TZ_n-\frac{Z_n^T\widehat s_\sigma\widehat s_\sigma^TZ_n}{\widehat s_\sigma^T\widehat s_\sigma}. 
\end{align}

Since $\widehat\sigma_j\rightarrow\sigma_j$ in probability and $\widehat\lambda_j\rightarrow\lambda_j$, $\widehat\lambda_j/\widehat\sigma_j^2\rightarrow\lambda_j/\sigma_j^2$.
Let $\lambda_\sigma=(\lambda_1/\sigma_1^2,\lambda_2/\sigma_2^2,\dots,$ $\lambda_k/\sigma_k^2)^T$, $s_\sigma=(\lambda_1^{1/2}/\sigma_1,\lambda_2^{1/2}/\sigma_2,\dots,\lambda_k^{1/2}/\sigma_k)^T$, $\Lambda_\sigma=diag(\lambda_\sigma)$, and $Z=N^{1/2}\Lambda_\sigma^{1/2}\widehat V$.
Thus, $\widehat\lambda_\sigma\rightarrow\lambda_\sigma$, $\widehat s_\sigma\rightarrow s_\sigma$, and $\widehat\Lambda_\sigma\rightarrow\Lambda_\sigma$.
Continuing from \eqref{eq:nu}, we see that the limiting distribution of $NU_n/\sum_{j=1}^k(\widehat\lambda_j/\widehat\sigma_j^2)$ is the same as that of
\begin{equation}\label{eq:nu2}
 Z^T(I-s_\sigma s_\sigma^T/s_\sigma^Ts_\sigma)Z=(AZ)^TAZ,
\end{equation}
where $A=I-s_\sigma s_\sigma^T/s_\sigma^Ts_\sigma$.
Since $AZ\rightarrow N(0,A)$ in distribution and $A$ is a projection matrix with $k-1$ ones eigenvalues and 1 zero eigenvalue, the limiting distribution of \eqref{eq:nu2} is $\chi^2_{k-1}$.

(3) We prove (3) by using the similar techniques in (2).
Let $\widehat \rho=(\widehat \rho_1,\widehat \rho_2,\dots,\widehat \rho_k)^T$, $\widehat\lambda_\gamma=(\widehat\lambda_1/\widehat\gamma_1^2,$ $\widehat\lambda_2/\widehat\gamma_2^2,\dots,\widehat\lambda_k/\widehat\gamma_k^2)^T$, $\widehat s_\gamma=(\widehat\lambda_1^{1/2}/\widehat\gamma_1,\widehat\lambda_2^{1/2}/\widehat\gamma_2,\dots,\widehat\lambda_k^{1/2}/\widehat\gamma_k)^T$, $\widehat\Lambda_\gamma=diag(\widehat\lambda_\gamma)$, and $W_n = N^{1/2}\widehat\Lambda_\gamma^{1/2}\widehat \rho$.
Let $\lambda_\gamma=(\lambda_1/\gamma_1^2,\lambda_2/\gamma_2^2,\dots,\lambda_k/\gamma_k^2)^T$, $s_\gamma=(\lambda_1^{1/2}/\gamma_1,\lambda_2^{1/2}/\gamma_2,\dots,$ $\lambda_k^{1/2}/\gamma_k)^T$, $\Lambda_\gamma=diag(\lambda_\gamma)$, and $W=N^{1/2}\Lambda_\gamma^{1/2}\widehat \rho$.
Thus, $\widehat\lambda_\gamma\rightarrow\lambda_\gamma$, $\widehat s_\gamma\rightarrow s_\gamma$, and $\widehat\Lambda_\gamma\rightarrow\Lambda_\gamma$.
The limiting distribution of $NR_n/\sum_{j=1}^k(\widehat\lambda_j/\widehat\gamma_j^2)$ is the same as that of
\begin{equation}\label{eq:nr}
 W^T(I-s_\gamma s_\gamma^T/s_\gamma^Ts_\gamma)W=(BW)^TBW,
\end{equation}
where $B=I-s_\gamma s_\gamma^T/s_\gamma^Ts_\gamma$.
Since $BW\rightarrow N(0,B)$ in distribution and $B$ is a projection matrix with $k-1$ ones eigenvalues and 1 zero eigenvalue, the limiting distribution of \eqref{eq:nr} is $\chi^2_{k-1}$.
\end{proof}

\subsection{Proof of Theorem \ref{th:proptest}}
\begin{proof}
First, Proposition \ref{prop:tests} implies that 
\[
\frac{ND_n^2}{\sum_{j=1}^k\widehat\lambda_j^2\widehat\sigma_j^2} = o_P(1).
\]

Since the limiting distribution of 
\begin{equation}\label{eq:nu+nr}
\frac{NU_n}{\sum_{j=1}^k\widehat\lambda_j/\widehat\sigma_j^2} + \frac{NR_n}{\sum_{j=1}^k\widehat\lambda_j/\widehat\gamma_j^2}
\end{equation}
is the same as that of 
\[
(AZ)^TAZ+(BW)^TBW,
\]
we analyze the joint distribution of $((AZ)^T,(BW)^T)^T$ in \eqref{eq:nu2} and \eqref{eq:nr}.
Let 
\[
\widetilde V_j = \frac{1}{N_j}\sum_{i\in G_j}\sum_{l=1}^{r_i}d^2(\mu_j,Y_{il}),~\widetilde V = (\widetilde V_1,\widetilde V_2,\dots,\widetilde V_k)^T,\text{ and }\widetilde Z = N^{1/2}\Lambda_\sigma^{1/2}\widetilde V.
\]
We have
\begin{align*}
&cov(\widetilde V_j,\widehat\rho_j) \\
= & cov\left(\frac{1}{N_j}\sum_{i\in G_j}\sum_{l=1}^{r_i}d^2(\mu_j,Y_{il}),\frac{1}{\sum_{i\in G_j}r_i^2-N_j}\sum_{i \in G_j}\sum_{s\neq t}d^2(Y_{is},Y_{it})\right) \\
= & \frac{1}{N_j(\sum_{i\in G_j}r_i^2-N_j)}\sum_{i\in G_j} cov\left(\sum_{l=1}^{r_i}d^2(\mu_j,Y_{il}),\sum_{s\neq t}d^2(Y_{is},Y_{it})\right) \\
 = & \frac{1}{N_j(\sum_{i\in G_j}r_i^2-N_j)}\sum_{i\in G_j}\big\{2r_i(r_i-1)cov\left(d^2(\mu_j,Y_{il}),d^2(Y_{is},Y_{il})\right) \\
 &+ r_i(r_i-1)(r_i-2)cov\left(d^2(\mu_j,Y_{il}),d^2(Y_{is},Y_{it})\right)\big\} \\
 = & \frac{\sum_{i\in G_j}r_i(r_i-1)(r_i-2)}{N_j(\sum_{i\in G_j}r_i^2-N_j)}cov\left(d^2(\mu_j,Y_{il}),d^2(Y_{is},Y_{it})\right) \\
 & + \frac{2}{N_j}cov\left(d^2(\mu_j,Y_{il}),d^2(Y_{is},Y_{il})\right).
\end{align*}
Let \(\xi_j = \Sigma_{jj} / (\sigma_j \gamma_j)\), where
\begin{align*}
\Sigma_{jj} = & \lim_{n_j \to \infty} \frac{\sum_{i \in G_j} r_i (r_i - 1)(r_i - 2)}{\sum_{i \in G_j} r_i^2 - N_j} \operatorname{Cov}\left( d^2(\mu_j, Y_{il}), d^2(Y_{is}, Y_{it}) \right) \\
& + 2 \operatorname{Cov}\left( d^2(\mu_j, Y_{il}), d^2(Y_{is}, Y_{il}) \right), \quad s, l, t \text{ are distinct}.
\end{align*}
Since
\[
cov(\widetilde V_j,\widehat\rho_l) = 0 \text{ when }j\neq l,
\]
we have
\begin{align}\label{eq:zw}
cov(\widetilde Z,W) = & N\Lambda_\sigma^{1/2}cov(\widetilde V,\widehat\rho)\Lambda_\gamma^{1/2} = diag(\xi_1,\dots,\xi_k)=O(1).
\end{align}
Note that 
\[
cov(Z-\widetilde Z,W) = N^{1/2}\Lambda_\sigma^{1/2}cov(\widehat V-\widetilde V,W),
\]
where $\widehat V-\widetilde V=o_P(N^{-1/2})$ by \eqref{prop:diff} and $W=O_P(1)$. We have 
\begin{equation}\label{eq:zzw}
cov(Z-\widetilde Z,W)=o(1).
\end{equation}
Hence, \eqref{eq:zw} and \eqref{eq:zzw} imply
\begin{align*}
cov(Z,W) = & cov(\widetilde Z+Z-\widetilde Z,W)\\
= & diag(\xi_1,\dots,\xi_k)+o(1).
\end{align*}
Since
\[
E((AZ)^T,(BW)^T)^T)=0,
\]
\[
cov((AZ)^T,(BW)^T)^T)=\begin{pmatrix}
A & Adiag(\xi_1,\dots,\xi_k)B\\
Bdiag(\xi_1,\dots,\xi_k)A & B
\end{pmatrix}+o(1),
\]
\[
((AZ)^T,(BW)^T)^T\rightarrow N\left(0,\begin{pmatrix}
A & Adiag(\xi_1,\dots,\xi_k)B\\
Bdiag(\xi_1,\dots,\xi_k)A & B
\end{pmatrix}\right) 
\]
in distribution.
Continuing from \eqref{eq:nu+nr}, we see that
\[
\frac{NU_n}{\sum_{j=1}^k\widehat\lambda_j/\widehat\sigma_j^2} + \frac{NR_n}{\sum_{j=1}^k\widehat\lambda_j/\widehat\gamma_j^2}\rightarrow\sum_{j=1}^{2k-2}\phi_j\chi_1^2,
\] 
where $\phi_j$ are positive eigenvalues of $\begin{pmatrix}
A & Adiag(\xi_1,\dots,\xi_k)B\\
Bdiag(\xi_1,\dots,\xi_k)A & B
\end{pmatrix}$.
\end{proof}

\subsection{Proof of Theorem \ref{th:power}}
\begin{proof}
Let 
\[
V_j^0 = \widehat V_j-V_j,\quad \rho_j^0=\widehat\rho_j-\rho_j.
\]
The statistics $U_n$ and $R_n$ can be rewrited as
\[
U_n=\sum_{j<l}\frac{\widehat\lambda_j\widehat\lambda_l}{\widehat\sigma_j^2\widehat\sigma_l^2}(\widehat V_j-\widehat V_l)^2
= U_n^0+\Delta_U, 
\]
\[
R_n=\sum_{j<l}\frac{\widehat\lambda_j\widehat\lambda_l}{\widehat\gamma_j^2\widehat\gamma_l^2}(\widehat \rho_j-\widehat \rho_l)^2
= R_n^0+\Delta_R, 
\]
where 
\[
U_n^0 = \sum_{j<l}\frac{\widehat\lambda_j\widehat\lambda_l}{\widehat\sigma_j^2\widehat\sigma_l^2}(V^0_j-V^0_l)^2,
\]
\[
\Delta_U = \sum_{j<l}\frac{\widehat\lambda_j\widehat\lambda_l}{\widehat\sigma_j^2\widehat\sigma_l^2}(V_j-V_l)^2+2\sum_{j<l}\frac{\widehat\lambda_j\widehat\lambda_l}{\widehat\sigma_j^2\widehat\sigma_l^2}(V^0_j-V^0_l)(V_j-V_l),
\]
\[
R_n^0 = \sum_{j<l}\frac{\widehat\lambda_j\widehat\lambda_l}{\widehat\gamma_j^2\widehat\gamma_l^2}(\rho^0_j-\rho^0_l)^2,
\]
\[
\Delta_R = \sum_{j<l}\frac{\widehat\lambda_j\widehat\lambda_l}{\widehat\gamma_j^2\widehat\gamma_l^2}(\rho_j-\rho_l)^2+2\sum_{j<l}\frac{\widehat\lambda_j\widehat\lambda_l}{\widehat\gamma_j^2\widehat\gamma_l^2}(\rho^0_j-\rho^0_l)(\rho_j-\rho_l).
\]
% The proof of Theorem \ref{th:proptest} implies that 
% \[
% \frac{NU_n^0}{\sum_{j=1}^k\widehat\lambda_j/\widehat\sigma_j^2}+\frac{NR_n^0}{\sum_{j=1}^k\widehat\lambda_j/\widehat\gamma_j^2} \rightarrow \sum_{j=1}^{2k-2}\phi_j\chi_1^2 \text{ in distribution},
% \]
% where $\phi_j~(j=1,\dots,2k-2)$ are positive eigenvalues of $\begin{pmatrix}
% A & Adiag(\xi_1,\dots,\xi_k)B\\
% Bdiag(\xi_1,\dots,\xi_k)A & B
% \end{pmatrix}$.
Hence,
\begin{align*}
Q_n = & \frac{N D_n^2}{\sum_{j=1}^k \widehat{\lambda}_j^2 \widehat{\sigma}_j^2} + \frac{N U_n}{\sum_{j=1}^k\widehat{\lambda}_j/\widehat{\sigma}_j^2} + \frac{N R_n}{\sum_{j=1}^k \widehat{\lambda}_j/\widehat{\gamma}_j^2} \\
= & \frac{N D_n^2}{\sum_{j=1}^k \widehat{\lambda}_j^2 \widehat{\sigma}_j^2} + \frac{N (U_n^0+\Delta_U)}{\sum_{j=1}^k \widehat{\lambda}_j/\widehat{\sigma}_j^2} + \frac{N(R_n^0+\Delta_R)}{\sum_{j=1}^k\widehat{\lambda}_j/\widehat{\gamma}_j^2}
\end{align*}

Note that 
\[
\frac{D_n^2}{\sum_{j=1}^k\widehat\lambda_j^2\widehat\sigma_j^2}+\frac{\Delta_U}{\sum_{j=1}^k\widehat\lambda_j/\widehat\sigma_j^2}+\frac{\Delta_R}{\sum_{j=1}^k\widehat\lambda_j/\widehat\gamma_j^2} 
\] 
is a uniformly consistent estimator of 
\[
\frac{D^2}{\sum_{j=1}^k\lambda_j^2\sigma_j^2}+\frac{U}{\sum_{j=1}^k\lambda_j/\sigma_j^2}+\frac{R}{\sum_{j=1}^k\lambda_j/\gamma_j^2}, 
\]
where 
\[
D = V_p-\sum_{j=1}^k\lambda_jV_j,~U=\sum_{j<l}\frac{\lambda_j\lambda_l}{\sigma_j^2\sigma_l^2}(V_j-V_l)^2,~R=\sum_{j<l}\frac{\lambda_j\lambda_l}{\gamma_j^2\gamma_l^2}(\rho_j-\rho_l)^2.
\]
We have 
\begin{align*}
\lim_{n\rightarrow\infty}\frac{Q_n}{N}= &\frac{D^2}{\sum_{j=1}^k\lambda_j^2\sigma_j^2}+\frac{U}{\sum_{j=1}^k\lambda_j/\sigma_j^2}+\frac{R}{\sum_{j=1}^k\lambda_j/\gamma_j^2} \\
&+ \lim_{n\rightarrow\infty}\left(\frac{U_n^0}{\sum_{j=1}^k \widehat{\lambda}_j/\widehat{\sigma}_j^2} + \frac{R_n^0}{\sum_{j=1}^k\widehat{\lambda}_j/\widehat{\gamma}_j^2}\right).
\end{align*}
When any of the location, scale, or within-subject variability alternatives exist, i.e., \(D \neq 0\), \(U > 0\), or \(R > 0\), the limit of \(Q_n / N\) is strictly positive, indicating the consistency of the proposed test \(Q_n\).
\end{proof}

\section{Asymptotic analysis for random univariate distributional data}
We use $Y_{il}^+$ to denote the estimate of true distribution $Y_{il}$ using a random sample $X_{ilu}~(u=1,\dots,W_{il})$.
To construct $\widehat Q_n$ based on $Y_{il}^+$, we define the following quantities in a similar way as those in the main text.
The sample Fr{\'e}chet mean $\widehat\mu_j^+$, sample Fr{\'e}chet variance $\widehat V_j^+$, and sample within-subject variability $\widehat\rho_j^+$ of group $j$ are
\[
\widehat\mu_j^+ = \arg\min_{\omega\in\Omega}\frac{1}{N_j}\sum_{i\in G_j}\sum_{l=1}^{r_i}d_W^2(\omega,Y_{il}^+)=\left(\frac{1}{N_j}\sum_{i\in G_j}\sum_{l=1}^{r_i}(Y_{il}^+)^{-1}\right)^{-1},
\]
\[
\widehat V_j^+=\frac{1}{N_j}\sum_{i\in G_j}\sum_{l=1}^{r_i}d_W^2(\widehat\mu_j^+,Y_{il}^+),
\]
\[
\widehat\rho_j^+ = \frac{1}{\sum_{i\in G_j}r_i^2-N_j}\sum_{i\in G_j}\sum_{s\neq t}d_W^2(Y_{is}^+,Y_{it}^+),\text{ respectively.}
\]
The variance estimate of $N_j^{1/2}\widehat V_j^+$ is given by

\begin{align*}
\widehat\sigma_j^{+2} = & \frac{1}{N_j}\sum_{i\in G_j}\left(\sum_{l=1}^{r_i}d_W^2(\widehat\mu_j^+,Y_{il}^+)\right)^2-\frac{\sum_{i\in G_j}r_i^2}{N_j}\widehat V_j^{+2}. 
\end{align*}
The variance estimate of $N^{1/2}_j\widehat\rho_j^+$ is given by
\begin{align*}
\widehat\gamma_j^{+2} = & \frac{N_j}{(\sum_{i\in G_j}r_i^2-N_j)^2}\sum_{i\in G_j}\left(\sum_{s\neq t}d_W^2(Y_{is}^+,Y_{it}^+)\right)^2 - \frac{N_j\sum_{i\in G_j}r_i^2(r_i-1)^2}{(\sum_{i\in G_j}r_i^2-N_j)^2}\widehat\rho_j^{+2}.
\end{align*}
Let $\widehat\mu_p^+$ be the pooled sample Fr{\'e}chet mean and $\widehat V_p^+$ be the corresponding pooled sample Fr{\'e}chet variance,
\[
\widehat\mu_p^+ = \arg\min_{\omega\in\Omega}\frac{1}{N}\sum_{j=1}^k\sum_{i\in G_j}\sum_{l=1}^{r_i}d_W^2(\omega,Y_{il}^+),~\widehat V_p^+=\frac{1}{N}\sum_{j=1}^k\sum_{i\in G_j}\sum_{l=1}^{r_i}d_W^2(\widehat\mu_p^+,Y_{il}^+).
\]

Let 
\[
D_n^+ = \widehat V_p^+-\sum_{j=1}^k\widehat\lambda_j\widehat V_j^+,~U_n^+=\sum_{j<l}\widehat u_{jl}(\widehat V_j^+-\widehat V_l^+)^2,~R_n^+=\sum_{j<l}\widehat b_{jl}(\widehat \rho_j^+-\widehat \rho_l^+)^2,  
\]
and 
\[
Q_n^+ = \frac{ND_n^{+2}}{\sum_{j=1}^k\widehat\lambda_j^2\widehat\sigma_j^{+2}} + \frac{NU_n^+}{\sum_{j=1}^k\widehat\lambda_j/\widehat\sigma_j^{+2}} + \frac{NR_n^+}{\sum_{j=1}^k\widehat\lambda_j/\widehat\gamma_j^{+2}}.
\]
where $\widehat\lambda_j=N_j/N$, $\widehat u_{jl}=\widehat\lambda_j\widehat\lambda_l/\widehat\sigma_j^{+2}\widehat\sigma_l^{+2}$, and $\widehat b_{jl}=\widehat\lambda_j\widehat\lambda_l/\widehat\gamma_j^{+2}\widehat\gamma_l^{+2}$.

Now $\xi_j$ can be estimated using finite sample estimate
\[
\widehat\xi_j^+ = N\widehat\lambda_j\widehat\Sigma_{jj}^+/(\widehat\sigma_j^+\widehat\gamma_j^+),
\]
where
\begin{align*}
\widehat\Sigma_{jj}^+ = &\frac{1}{N_j(\sum_{i\in G_j}r_i^2-N_j)}\Bigg\{\sum_{i\in G_j}\sum_ld^2(\widehat\mu_j^+,Y_{il}^+)\sum_{s,t}d^2(Y_{is}^+,Y_{it}^+)\\
&-\sum_{i\in G_j}r_i^2(r_i-1)\widehat V_j^+\widehat\rho_j^+\Bigg\}.
\end{align*}
\begin{theorem}\label{th:proptest2}
Under the null hypothesis of equal population Fr\'echet means, variances, and within-subject variabilities, and under Conditions \ref{assp:unique}-\ref{assp:yplus}, as $n\rightarrow\infty$, 
\[
Q_n^+\rightarrow \sum_{j=1}^{2k-2}\phi_j\chi_1^2 \text{ in distribution},
\]
where $\phi_j~(j=1,\dots,2k-2)$ are positive eigenvalues of 
\[\begin{pmatrix}
A & Adiag(\xi_1,\dots,\xi_k)B\\
Bdiag(\xi_1,\dots,\xi_k)A & B
\end{pmatrix}
\] and \(\chi^2_1\) are independent chi-squared random variables with one degree of freedom.
\end{theorem}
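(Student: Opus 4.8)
The plan is to show that replacing the true distributions $Y_{il}$ by their estimates $Y_{il}^+$ perturbs every ingredient of the test statistic by a negligible amount, so that $Q_n^+ = Q_n + o_P(1)$, after which the conclusion follows from Theorem \ref{th:proptest}. The single analytic input driving everything is the master estimate that, under Conditions \ref{assp:samplesize}--\ref{assp:yplus}, the averaged estimation error $\bar\delta_N^{(j)} := N_j^{-1}\sum_{i\in G_j}\sum_{l=1}^{r_i} d_W(Y_{il}^+, Y_{il})$ satisfies $N_j^{1/2}\bar\delta_N^{(j)} = o_P(1)$. This holds because Condition \ref{assp:yplus} gives $\bE\, d_W(Y_{il}^+, Y_{il}) = \bE\{\bE(d_W(Y_{il}^+, Y_{il})\mid Y_{il})\} = O(b_N)$ uniformly in $(i,l)$ with $b_N = o(N^{-1/2})$, so $\bE(N_j^{1/2}\bar\delta_N^{(j)}) = N_j^{1/2} O(b_N) = o(1)$ and Markov's inequality closes the argument.

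First I would record two elementary facts specific to the Wasserstein geometry on the compact support $\mathcal{I}$. Writing $L = \mathrm{diam}(\mathcal{I})$, the squared distance is Lipschitz: for distributions in $\mathcal{W}_2(\mathcal{I})$, $|d_W^2(F_1,F_2) - d_W^2(G_1,G_2)| \leq 2L\{d_W(F_1,G_1) + d_W(F_2,G_2)\}$, by $|a^2-b^2|\le 2L|a-b|$ together with the triangle inequality. Moreover, since the Fr\'echet mean in $\mathcal{W}_2$ is the quantile average, $d_W(\widehat\mu_j^+,\widehat\mu_j) = \|N_j^{-1}\sum_{i,l}\{(Y_{il}^+)^{-1} - Y_{il}^{-1}\}\|_{L^2} \leq \bar\delta_N^{(j)}$, so the error in the estimated Fr\'echet mean is itself controlled by $\bar\delta_N^{(j)}$.

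Next I would combine these with the master estimate and the boundedness $r_i\le C$ (Condition \ref{assp:ri}) to prove the required approximation lemmas: (i) $N_j^{1/2}(\widehat V_j^+ - \widehat V_j) = o_P(1)$ and $N^{1/2}(\widehat V_p^+ - \widehat V_p) = o_P(1)$, each following from the Lipschitz bound and the mean-error bound, which give $|\widehat V_j^+ - \widehat V_j|\le 4L\bar\delta_N^{(j)}$; (ii) $N_j^{1/2}(\widehat\rho_j^+ - \widehat\rho_j) = o_P(1)$, where after accounting for the multiplicities in $\sum_{s\ne t}$ and the boundedness of $N_j/\sum_{i\in G_j} r_i(r_i-1)$ one again bounds the difference by a constant multiple of $\bar\delta_N^{(j)}$; and (iii) the consistency of the variance and covariance estimators, $\widehat\sigma_j^{+2} - \widehat\sigma_j^2 = o_P(1)$, $\widehat\gamma_j^{+2} - \widehat\gamma_j^2 = o_P(1)$, and $\widehat\Sigma_{jj}^+ - \widehat\Sigma_{jj} = o_P(1)$. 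For (iii) only $o_P(1)$ is needed, and since each summand (e.g.\ $\sum_l d^2(\widehat\mu_j,Y_{il})$ or $\sum_{s\ne t} d^2(Y_{is},Y_{it})$) is bounded by $CL^2$, factoring $|a^2-b^2| = |a-b||a+b|$ reduces the problem to the already-controlled first-order differences. Combined with Theorem \ref{th:expression}, this yields $\widehat\sigma_j^{+2}\to\sigma_j^2$, $\widehat\gamma_j^{+2}\to\gamma_j^2$, and $\widehat\xi_j^+\to\xi_j$ in probability.

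Finally I would assemble these pieces exactly as in the proof of Theorem \ref{th:proptest}. Using the diagonal-scaling identity $\widehat\Lambda_\sigma^{+1/2}\mathbf{1} = \widehat s_\sigma^+$, the sample projection $\widehat A^+$ annihilates the diverging part $N^{1/2}V\widehat s_\sigma^+$ of $Z_n^+ = N^{1/2}\widehat\Lambda_\sigma^{+1/2}\widehat V^+$ exactly, leaving $\widehat A^+ Z_n^+ = \widehat A^+ N^{1/2}\widehat\Lambda_\sigma^{+1/2}(\widehat V^+ - V\mathbf 1)$; lemma (i) gives $N^{1/2}(\widehat V^+ - V\mathbf 1) = N^{1/2}(\widehat V - V\mathbf 1) + o_P(1)$, and consistency of $\widehat A^+$ and $\widehat\Lambda_\sigma^{+1/2}$ then yields $\widehat A^+ Z_n^+ = \widehat A Z_n + o_P(1)$, and analogously $\widehat B^+ W_n^+ = \widehat B W_n + o_P(1)$. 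Since $N^{1/2}D_n^+ = o_P(1)$ follows from lemma (i) and \eqref{prop:diff}, we conclude $Q_n^+ = Q_n + o_P(1)$ and invoke Theorem \ref{th:proptest}, whose limit already carries the true $\xi_j$ and hence the stated eigenvalues $\phi_j$. The main obstacle is the bookkeeping in step (ii), and establishing that Condition \ref{assp:yplus} delivers precisely the $o_P(N^{-1/2})$ rate---not merely $o_P(1)$---needed at the central-limit scale for $\widehat V_j^+$ and $\widehat\rho_j^+$, while only consistency is required for the variance estimators; the calibration $b_N = o(N^{-1/2})$ is exactly what makes this separation work.
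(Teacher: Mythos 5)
Your proposal is correct and follows essentially the same route as the paper: bound the perturbation of $\widehat\mu_j^+$, $\widehat V_j^+$, and $\widehat\rho_j^+$ by the Wasserstein estimation error (using the quantile-average form of the Fr\'echet mean and the Lipschitz property of $d_W^2$ on the compact support), show these perturbations are $o_P(N_j^{-1/2})$ under Condition \ref{assp:yplus}, and then reduce to Theorem \ref{th:proptest}. Your use of the averaged error $\bar\delta_N^{(j)}$ rather than the supremum $\sup_{i,l} d_W(Y_{il}^+,Y_{il})$ is a minor (and arguably cleaner) variant of the paper's bookkeeping, not a different argument.
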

\begin{proof}
We will show that under the conditions of Theorem \ref{th:proptest2}, 
\begin{align*}
& (1) ~\frac{1}{N_j}\sum_{i\in G_j}\sum_{l=1}^{r_i}\left\{d_W^2(\widehat\mu_j^+,Y_{il}^+)-d_W^2(\mu_j,Y_{il})\right\} = o_P(N_j^{-1/2}), \\
& (2) ~N_j^{1/2}(\widehat V_j^+-V_j)\rightarrow N(0,\sigma_j^2) \text{ in distribution}, \\
& (3) ~N_j^{1/2}(\widehat\rho_j^+-\rho_j)\rightarrow N(0,\gamma_j^2) \text{ in distribution}.
\end{align*}
Then Theorem \ref{th:proptest2} can be proved using similar techniques as in the proof of Theorem \ref{th:proptest}.

Proof of (1).
Since
\[
\widehat\mu_j^+ = \left(\frac{1}{N_j}\sum_{i\in G_j}\sum_{l=1}^{r_i}(Y_{il}^+)^{-1}\right)^{-1},~\widehat\mu_j = \left(\frac{1}{N_j}\sum_{i\in G_j}\sum_{l=1}^{r_i}(Y_{il})^{-1}\right)^{-1},
\]
\[
d_W(\widehat\mu_j^+,\widehat\mu_j) = \left\|\frac{1}{N_j}\sum_{i\in G_j}\sum_{l=1}^{r_i}\left((Y_{il}^+)^{-1}-(Y_{il})^{-1}\right)\right\|_2\leq\sup_{Y_{il}}d_W(Y_{il}^+,Y_{il})=O_P(b_W).
\]
Then
\begin{align*}
& \frac{1}{N_j}\sum_{i\in G_j}\sum_{l=1}^{r_i}\left\{d_W^2(\widehat\mu_j^+,Y_{il}^+)-d_W^2(\widehat\mu_j,Y_{il})\right\} \\
= & \frac{1}{N_j}\sum_{i\in G_j}\sum_{l=1}^{r_i}\left\{d_W^2(\widehat\mu_j^+,Y_{il}^+)-d_W^2(\widehat\mu_j^+,Y_{il})+d_W^2(\widehat\mu_j^+,Y_{il})-d_W^2(\widehat\mu_j,Y_{il})\right\} \\
\leq & \frac{1}{N_j}\sum_{i\in G_j}\sum_{l=1}^{r_i} 2diam(\Omega)(d_W(Y_{il}^+,Y_{il})+d_W(\widehat\mu_j^+,\widehat\mu_j))=O_P(b_W).
\end{align*}
Hence,
\begin{align*}
& \frac{1}{N_j}\sum_{i\in G_j}\sum_{l=1}^{r_i}\left\{d_W^2(\widehat\mu_j^+,Y_{il}^+)-d_W^2(\mu_j,Y_{il})\right\} \\
= & \frac{1}{N_j}\sum_{i\in G_j}\sum_{l=1}^{r_i}\left\{d_W^2(\widehat\mu_j^+,Y_{il}^+)-d_W^2(\widehat\mu_j,Y_{il})+d_W^2(\widehat\mu_j,Y_{il})-d_W^2(\mu_j,Y_{il})\right\}=o_P(N_j^{-1/2}).
\end{align*}

Proof of (2).
\begin{align*}
& N_j^{1/2}(\widehat V_j^+-V_j) \\
=& N_j^{1/2}\frac{1}{N_j}\sum_{i\in G_j}\sum_{l=1}^{r_i}\left\{d_W^2(\widehat\mu_j^+,Y_{il}^+)-d_W^2(\mu_j,Y_{il})+d_W^2(\mu_j,Y_{il})-E[d_W^2(\mu_j,Y)]\right\} \\
=& \frac{1}{N_j^{1/2}}\sum_{i\in G_j}\sum_{l=1}^{r_i}\{d_W^2(\mu_j,Y_{il})-E[d_W^2(\mu_j,Y)]\}+o_P(1)\\
\rightarrow & N(0,\sigma^2),
\end{align*}
where (1) is applied in the second step.

Proof of (3).
Since 
\begin{align*}
&N_j^{1/2}(\widehat\rho_j-\widehat\rho_j^+) \\
=&N_j^{1/2}\frac{1}{\sum_{i\in G_j}r_i^2-N_j}\sum_{i\in G_j}\sum_{s\neq t}\left(d_W^2(Y_{is},Y_{it})-d_W^2(Y_{is}^+,Y_{it}^+)\right) \\
=&N_j^{1/2}\frac{1}{\sum_{i\in G_j}r_i^2-N_j}\sum_{i\in G_j}\sum_{s\neq t}\big(d_W^2(Y_{is},Y_{it})-d_W^2(Y_{is},Y_{it}^+)+d_W^2(Y_{is},Y_{it}^+)\\
&-d_W^2(Y_{is}^+,Y_{it}^+)\big)\\
\leq& N^{1/2}2diam(\Omega)(d_W(Y_{it},Y_{it}^+)+d_W(Y_{is},Y_{is}^+))=o_P(1),
\end{align*}
Theorem \ref{th:rho} implies $N_j^{1/2}(\widehat\rho_j^+-\rho_j)\rightarrow N(0,\gamma_j^2)$ in distribution.
\end{proof}

\section{Technical lemmas}
\begin{lemma}[Symmetrization of empirical processes]\label{lemma: symmetric-empirical-process}
Suppose that for random variables $X_{ij}$, $i=1,\dots,n,$ $j=1,\dots,r_i$, $X_{ij}$ and $X_{st}$ are independent when $i\neq s$, and possibly nonindependent when $i=s$. Let $N=\sum_{i=1}^nr_i$.
\[
    \EE\sup_{f\in\cF}\left(\frac{1}{N} \sum_{i=1}^n \sum_{j=1}^{r_i}f(X_{ij}) - \EE f(X)\right)\leq 2\EE\sup_{f\in\cF}\left(\frac{1}{N} \sum_{i=1}^n\xi_i\sum_{j=1}^{r_i} f(X_{ij})\right),
\]
where $\xi_1,\dots,\xi_n$ are \iid Rademacher random variable: $\PP(\xi_i=1)=\PP(\xi_i=-1)=\frac 1 2$
\end{lemma}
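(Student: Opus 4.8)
The plan is to run the classical symmetrization argument, but with the independent sampling unit taken to be the \emph{subject} (cluster) $i$ rather than the individual observation $X_{ij}$. Write $Z_i(f)=\sum_{j=1}^{r_i}f(X_{ij})$, so that the quantity of interest is $N^{-1}\sum_{i=1}^n Z_i(f)$ and the $Z_i(f)$ are independent across $i$ by hypothesis, even though the summands inside each $Z_i(f)$ may be dependent. First I would introduce a ghost sample $X'_{ij}$, an independent copy of $\{X_{ij}\}$ with the identical joint law, and set $Z'_i(f)=\sum_{j=1}^{r_i}f(X'_{ij})$. Because all $X_{ij}$ share the common marginal law of $X$, we have $N^{-1}\sum_{i}\sum_{j}\EE f(X_{ij})=\EE f(X)$, and this equals $\EE'\bigl(N^{-1}\sum_i Z'_i(f)\bigr)$, where $\EE'$ denotes expectation over the ghost sample only.

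Next I would write $N^{-1}\sum_i Z_i(f)-\EE f(X)=\EE'\bigl(N^{-1}\sum_i [Z_i(f)-Z'_i(f)]\bigr)$, take the supremum over $f\in\cF$ inside, and apply Jensen's inequality (the supremum of a conditional expectation is bounded by the conditional expectation of the supremum). Taking full expectation then yields
\[
\EE\sup_{f\in\cF}\Bigl(\frac1N\sum_{i=1}^n Z_i(f)-\EE f(X)\Bigr)\le \EE\sup_{f\in\cF}\frac1N\sum_{i=1}^n\bigl[Z_i(f)-Z'_i(f)\bigr].
\]

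The core step is the sign symmetrization, and here the cluster structure must be handled with care: I would attach one Rademacher variable $\xi_i$ to each \emph{cluster}, not to each observation. The justification is that, for each fixed $i$, swapping the entire block $(X_{i1},\dots,X_{ir_i})$ with its ghost block $(X'_{i1},\dots,X'_{ir_i})$ leaves the joint distribution invariant and simultaneously flips the sign of $Z_i(f)-Z'_i(f)$ for all $f$; since the blocks are independent across $i$, applying independent signs $\xi_i$ that swap whole blocks preserves the law of the collection $\{Z_i(f)-Z'_i(f)\}_i$. Consequently
\[
\EE\sup_{f\in\cF}\frac1N\sum_i\bigl[Z_i(f)-Z'_i(f)\bigr]=\EE\sup_{f\in\cF}\frac1N\sum_i\xi_i\bigl[Z_i(f)-Z'_i(f)\bigr].
\]
Splitting the supremum across the difference, using that $-\xi_i$ has the same distribution as $\xi_i$ and that $X'$ is equal in law to $X$, both resulting terms equal $\EE\sup_{f}N^{-1}\sum_i\xi_i Z_i(f)$, which produces the factor $2$ and the claimed bound with $Z_i(f)=\sum_{j=1}^{r_i}f(X_{ij})$.

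The main obstacle is precisely this block-swapping justification: the textbook proof assigns one Rademacher sign per independent summand, and one must verify that using a single $\xi_i$ for the whole within-subject block (rather than $r_i$ separate signs) is both valid and necessary. Validity follows from the exchangeability of each original/ghost block pair together with independence across subjects; necessity is dictated by the within-subject dependence, which forbids treating the individual $X_{ij}$ as independent sampling units. Everything else is the routine ghost-sample and Jensen machinery.
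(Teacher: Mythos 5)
Your proposal is correct and follows essentially the same route as the paper's proof: introduce a ghost sample, pull the centering inside as an expectation over the ghost copy, apply Jensen, and then symmetrize with one Rademacher sign per subject, exploiting that the whole within-subject block $\sum_{j}\bigl(f(X_{ij})-f(X'_{ij})\bigr)$ is symmetric and the blocks are independent across $i$. Your explicit justification of the block-level sign flip is exactly the point the paper's proof relies on (stated there more tersely as the symmetry of $\sum_{j=1}^{r_i}f(X_{ij})-f(X_{ij}')$), so there is nothing to add.
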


\begin{proof}
Let $X_{ij}'$ be an independent copy of $X_{ij}$. To simplify the notation, we use $\EE_{X}$ and $\EE_{X'}$ to denote the expectation \wrt $\{X_{ij}\}$ and $\{X_{ij}'\}$, respectively.
Then,
\begin{align}
&\EE\sup_{f\in\cF}\left(\frac{1}{N} \sum_{i=1}^n\sum_{j=1}^{r_i} f(X_{ij}) - \EE f(X)\right) \\
=& \EE_{X}\sup_{f\in\cF}\EE_{X'}\left(\frac{1}{N} \sum_{i=1}^n\sum_{j=1}^{r_i} f(X_{ij}) - f(X_{ij}')\right) \\
\leq& \EE_{X,X'}\sup_{f\in\cF}\left(\frac{1}{N} \sum_{i=1}^n\sum_{j=1}^{r_i}f(X_{ij}) - f(X_{ij}')\right)
\end{align}
Due to that $\sum_{j=1}^{r_i}f(X_{ij})-f(X_{ij}')$ is symmetric, for any $\{\xi_i\}\in \{\pm 1\}^n$, we have 
\begin{align*}
&\EE_{X,X'}\sup_{f\in\cF}\left(\frac{1}{N} \sum_{i=1}^n\sum_{j=1}^{r_i} f(X_{ij}) - f(X_{ij}')\right)\\
=& \EE_{X,X'}\sup_{f\in\cF}\frac{1}{N} \sum_{i=1}^n \xi_i\sum_{j=1}^{r_i}(f(X_{ij}) - f(X_{ij}')) \\
=& \EE_{X,X',\xi}\sup_{f\in\cF}\frac{1}{N} \sum_{i=1}^n \xi_i\sum_{j=1}^{r_i}(f(X_{ij}) - f(X_{ij}'))\\
\leq & \EE_{X,X',\xi}\left(\sup_{f\in\cF}\frac{1}{N} \sum_{i=1}^n \xi_i\sum_{j=1}^{r_i}f(X_{ij}) + \sup_{f\in\cF}\frac{1}{N} \sum_{i=1}^n -\xi_i\sum_{j=1}^{r_i} f(X'_{ij})\right)\\
=& 2 \EE_{X_i,\xi}\sup_{f\in\cF}\frac{1}{N} \sum_{i=1}^n \xi_i\sum_{j=1}^{r_i}f(X_{ij})
\end{align*}
\end{proof}

\begin{definition}[Modified Rademacher complexity]
The empirical Rademacher complexity of a function class $\cF$ on repeated measures data $\{X_{ij}\}$ is defined as 
\[
\erad(\cF) = \EE_{\xi}\left(\sup_{f\in\cF}\frac{1}{N} \sum_{i=1}^n \xi_i\sum_{j=1}^{r_i} f(X_{ij})\right).
\]
The population Rademacher complexity is given by 
\[
    \rad(\cF) = \EE_{X}(\erad(\cF)).
\]
\end{definition}

The symmetrization Lemma \ref{lemma: symmetric-empirical-process} implies that 
\begin{equation}
\EE\sup_{f\in\cF}\left(\frac{1}{N} \sum_{i=1}^n\sum_{j=1}^{r_i} f(X_{ij}) - \EE f(X)\right)\leq 2\rad(\cF).
\end{equation}

\begin{lemma}[Massart's lemma]
Assume that $\sup_{x\in \cX,f\in\cF}|f(x)|\leq B$ and $\cF$ is finite. Then, 
\[
    \erad(\cF)\leq \frac{B}{N}\sqrt{2\log |\cF|\sum_{i=1}^nr_i^2}.
\]
\end{lemma}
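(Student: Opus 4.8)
The plan is to recognize this as the repeated-measures analogue of Massart's finite-class maximal inequality and to prove it by the standard moment-generating-function argument, the only structural change being that each Rademacher sign $\xi_i$ multiplies an entire within-subject block $\sum_{j=1}^{r_i} f(X_{ij})$ rather than a single summand.

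First I would condition on the data $\{X_{ij}\}$, so that the only remaining randomness is in the signs $\xi_1,\dots,\xi_n$. For a fixed $f\in\cF$ set $a_i(f)=\sum_{j=1}^{r_i} f(X_{ij})$ and $Z_f=\sum_{i=1}^n \xi_i\, a_i(f)$, so that $\erad(\cF)=\tfrac1N\,\EE_\xi\max_{f\in\cF}Z_f$ (the supremum is a maximum because $\cF$ is finite). The boundedness hypothesis $|f|\le B$ then gives the crucial per-block bound $|a_i(f)|\le r_i B$, and this is exactly where the repeated measurements enter: a single subject contributes a quantity bounded by $r_i B$, not by $B$.

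Next, using the independence of the $\xi_i$ together with the elementary inequality $\cosh(x)\le \exp(x^2/2)$, I would obtain, uniformly in $f$ and for every $s>0$, the sub-Gaussian moment bound $\EE_\xi e^{sZ_f}=\prod_{i=1}^n\cosh\!\big(s\,a_i(f)\big)\le \exp\!\big(\tfrac{s^2}{2}\sum_{i=1}^n a_i(f)^2\big)\le \exp\!\big(\tfrac{s^2}{2}B^2\sum_{i=1}^n r_i^2\big)$. Thus each $Z_f$ is sub-Gaussian with variance proxy $\nu^2=B^2\sum_{i=1}^n r_i^2$.

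Finally I would run the usual maximal-inequality step: by Jensen's inequality and a union bound over the finite class, $\exp\!\big(s\,\EE_\xi\max_f Z_f\big)\le \EE_\xi\sum_{f\in\cF}e^{sZ_f}\le |\cF|\exp(\tfrac{s^2}{2}\nu^2)$, whence $\EE_\xi\max_f Z_f\le \tfrac{\log|\cF|}{s}+\tfrac{s\nu^2}{2}$; choosing $s=\sqrt{2\log|\cF|/\nu^2}$ gives $\EE_\xi\max_f Z_f\le\sqrt{2\nu^2\log|\cF|}$. Dividing by $N$ and substituting $\nu^2=B^2\sum_{i=1}^n r_i^2$ yields precisely $\erad(\cF)\le \tfrac{B}{N}\sqrt{2\log|\cF|\,\sum_{i=1}^n r_i^2}$. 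The only genuine subtlety, and the step I would be most careful about, is the variance-proxy bookkeeping in the third paragraph: because one sign multiplies an entire within-subject block, the correct proxy is $B^2\sum_{i=1}^n r_i^2$ rather than the $B^2 N$ of the i.i.d. case, and this is exactly what produces the $\sqrt{\sum_{i=1}^n r_i^2}$ factor in the statement.
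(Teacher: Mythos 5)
Your proof is correct and follows essentially the same route as the paper's: both bound the moment generating function of $Z_f=\sum_i \xi_i\sum_j f(X_{ij})$ blockwise to get the variance proxy $B^2\sum_i r_i^2$ (you via $\cosh(x)\le e^{x^2/2}$, the paper via Hoeffding's lemma applied to $\xi_i\sum_j f(X_{ij})\in[-r_iB,r_iB]$) and then apply the standard finite-class maximal inequality. The key observation you flag --- that each sign multiplies a whole within-subject block, so the proxy is $B^2\sum_i r_i^2$ rather than $B^2N$ --- is exactly the point of the paper's version as well.
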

\begin{proof}
Let $Z_f = \sumin \xi_i\sum_{j=1}^{r_i} f(X_{ij})$. Then, 
\begin{align*}
\log \EE(e^{\lambda Z_f}) & = \log\left(\prod_{i=1}^n \EE(e^{\lambda \xi_i \sum_{j=1}^{r_i}f(X_{ij})})\right) \\
& = \sum_{i=1}^n \log\EE e^{\lambda \xi_i \sum_{j=1}^{r_i}f(X_{ij})}\stackrel{(i)}{\leq} \sum_{i=1}^n \lambda^2 \frac{(r_iB-(-r_iB))^2}{8} = \frac{\lambda^2B^2}{2}\sum_{i=1}^nr_i^2,
\end{align*}
where $(i)$ follows from the Hoeffding's lemma, which provides an upper bound of the log-moment generating functions of a bounded random variable.
Since 
\[
\sup_{X,f}|\xi_i\sum_{j=1}^{r_i}f(X_{ij})|\leq\sum_{j=1}^{r_i}\sup_{X,f}|f(X_{ij})|\leq r_iB,
\]
$Z_f$ is sub-Gaussian with the variance proxy $\sigma^2=B^2\sum_{i=1}^nr_i^2$. Using the maximal inequality, we have 
\begin{align}
\erad(\cF) = \frac{1}{N}\EE_{\xi}(\sup_{f\in\cF} Z_f)\leq \frac{B}{N}\sqrt{2\log|\cF|\sum_{i=1}^nr_i^2}.
\end{align}
\end{proof}

Consider the function space $(\cF, L^2(\PP_n))$, where $\cF$ is the hypothesis class and $L^2(\PP_n)$ is defined by 
\[
    \|f-f'\|_{L^2(\PP_n)} = \sqrt{\frac{1}{N}\sum_{i=1}^n\sum_{j=1}^{r_i} (f(x_{ij})-f'(x_{ij}))^2},
\]
where $x_{ij}(i=1,\dots,n,j=1,\dots,r_i)$ denote the finite training samples. 

\begin{theorem}[Dudley's integral inequality]\label{thm: Dudley}
Let $D=\sup_{f,f'\in \cF}\|f-f'\|_{L^2(\PP_n)}$ be the diameter of $\cF$. Then,
\[
\erad(\cF)\leq 12 \int_0^D\frac{\sqrt{\log \cN(\cF,L^2(\PP_n),t)(\sum_{i=1}^nr_i^2)}}{N}\dd t.
\]
\end{theorem}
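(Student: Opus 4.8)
The plan is to prove this by the standard chaining (Dudley's entropy integral) argument, carefully adapted to the repeated-measures structure in which each Rademacher sign $\xi_i$ multiplies the \emph{aggregated} inner sum $\sum_{j=1}^{r_i} f(X_{ij})$ rather than a single term. First I would reduce to a centered process: because $\EE_\xi \xi_i = 0$, subtracting any fixed $f_0\in\cF$ leaves $\erad(\cF)$ unchanged (the subtracted quantity does not depend on $f$ and has $\xi$-mean zero), so I may replace each $f$ by $f-f_0$ and anchor the class at a single element of an $L^2(\PP_n)$-cover at the diameter scale $D$, where the covering number equals $1$.

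Next I would set up the dyadic chaining. Put $\epsilon_k = 2^{-k}D$ and let $T_k$ be a minimal $\epsilon_k$-cover of $\cF$ in $L^2(\PP_n)$, with $|T_k| = \cN(\cF,L^2(\PP_n),\epsilon_k)$ and nearest-point map $\pi_k$. Telescoping $f-f_0 = \sum_{k\ge 1}\bigl(\pi_k(f)-\pi_{k-1}(f)\bigr)$ and passing $\sup_f$ through the sum reduces the task to bounding, at each level $k$, the empirical Rademacher complexity of the finite increment class $\mathcal{G}_k=\{\pi_k(f)-\pi_{k-1}(f):f\in\cF\}$, which has at most $\cN(\cF,L^2(\PP_n),\epsilon_k)^2$ elements and whose members $g$ satisfy $\|g\|_{L^2(\PP_n)}\le \epsilon_k+\epsilon_{k-1}=3\epsilon_k$ by the triangle inequality.

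I would then apply Massart's lemma (equivalently, the sub-Gaussian maximal inequality underlying it) at each level. For an increment $g$, the per-subject variable $\xi_i\sum_{j=1}^{r_i}g(X_{ij})$ is bounded and mean-zero in $\xi_i$, so $Z_g=\sum_i\xi_i\sum_j g(X_{ij})$ is sub-Gaussian with variance proxy $\sum_i\bigl(\sum_j g(X_{ij})\bigr)^2$. The essential adaptation is to control this proxy by the $L^2(\PP_n)$ norm: Cauchy--Schwarz over the $r_i$ repeated measures gives $\bigl(\sum_j g(X_{ij})\bigr)^2\le r_i\sum_j g(X_{ij})^2$, whence $\sum_i\bigl(\sum_j g(X_{ij})\bigr)^2\le (\max_i r_i)\,N\,\|g\|_{L^2(\PP_n)}^2$, which under Condition~\ref{assp:ri} is comparable, up to a constant, to $\bigl(\sum_{i=1}^n r_i^2\bigr)\|g\|_{L^2(\PP_n)}^2$. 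Massart's lemma then yields a level-$k$ bound of order $\tfrac{\epsilon_k}{N}\sqrt{(\sum_{i=1}^n r_i^2)\,\log\cN(\cF,L^2(\PP_n),\epsilon_k)}$.

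Finally I would sum over levels. Since $\cN(\cF,L^2(\PP_n),\cdot)$ is nonincreasing, each term obeys $\epsilon_k\sqrt{\log\cN(\cF,L^2(\PP_n),\epsilon_k)}\le 2\int_{\epsilon_{k+1}}^{\epsilon_k}\sqrt{\log\cN(\cF,L^2(\PP_n),t)}\,\dd t$, so the geometric sum telescopes into the Dudley integral $\int_0^D\sqrt{\log\cN(\cF,L^2(\PP_n),t)}\,\dd t$; factoring out $\sqrt{\sum_{i=1}^n r_i^2}/N$ and tracking the numerical constants (the $3$ from the increment diameter, the $\sqrt 2$ from Massart, and the $2$ from the sum-to-integral comparison) produces the stated factor $12$. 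I expect the main obstacle to be precisely the variance-proxy step: because a single Rademacher sign is shared across a subject's $r_i$ correlated measurements, the intrinsic scale of the increment sums is the per-subject aggregated norm weighted by $\sum_{i=1}^n r_i^2$ rather than the flat $L^2(\PP_n)$ norm, and reconciling the two so that the stated $L^2(\PP_n)$ covering numbers appear is exactly what forces the boundedness of the $r_i$ and determines the constants.
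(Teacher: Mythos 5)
Your proposal follows essentially the same chaining argument as the paper: dyadic scales $\epsilon_k=2^{-k}D$, telescoping through nearest points in $\epsilon_k$-covers, a finite-class (Massart-type) maximal inequality at each level applied to increment classes of $L^2(\PP_n)$-diameter $3\epsilon_k$, and the standard sum-to-integral comparison. The one substantive point of divergence is the variance-proxy step, and it is worth being precise about what each version buys. The paper's Massart lemma is stated for a \emph{sup-norm} bound $B$ on the class and yields the proxy $B^2\sum_{i=1}^n r_i^2$; in the chaining step the paper then substitutes $B=3\varepsilon_j$, where $3\varepsilon_j$ only controls the $L^2(\PP_n)$ norm of the increments, not their sup norm. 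Your route instead bounds the proxy honestly in terms of the $L^2(\PP_n)$ norm via Cauchy--Schwarz over the inner sum, obtaining $\sum_i\bigl(\sum_j g(X_{ij})\bigr)^2\le(\max_i r_i)\,N\,\|g\|_{L^2(\PP_n)}^2$, which is the cleaner way to keep the covering numbers in the metric that actually appears in the statement.

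The concrete issue is your final claim that this "produces the stated factor $12$." It does not: since $\sum_{i=1}^n r_i^2\le(\max_i r_i)\,N$, your proxy $(\max_i r_i)N\|g\|^2$ sits on the \emph{larger} side of $(\sum_i r_i^2)\|g\|^2$, and converting it into the stated form costs a factor $\sqrt{(\max_i r_i)N/\sum_i r_i^2}$, which can be as large as $\sqrt{\max_i r_i}$. So your argument proves the inequality with constant $12\sqrt{\max_i r_i}$ (bounded under Condition \ref{assp:ri}, and harmless for Lemma \ref{lem:unibound}, which only carries an unspecified constant), but not the literal constant $12$. If you want the constant $12$ with the proxy $\sum_i r_i^2$ exactly as stated, you need the sup-norm control that the paper's Massart lemma presupposes; as written, neither your $L^2$ route nor the paper's substitution of an $L^2$ bound into a sup-norm hypothesis delivers that constant without further argument.
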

\begin{proof}
Let $\varepsilon_j=2^{-j}D$ be the dyadic scale and $\cF_j$ be an $\varepsilon_j$-cover of $\cF$.
Given $f\in\cF$, let $f_j\in \cF_j$ such that $\|f_j-f\|\leq \varepsilon_j$.
Consider the decomposition 
\begin{equation}\label{eqn: lec2-multi-resoluation}
f = f - f_m + \sum_{j=1}^m (f_j-f_{j-1}),
\end{equation}
where $f_0=0$.
Notice that 
\begin{itemize}
\item $\|f-f_m\|\leq \varepsilon_m$. 
\item $\|f_j-f_{j-1}\|\leq \|f_j-f\|+\|f-f_{j-1}\|\leq \varepsilon_j + \varepsilon_{j-1}\leq 3\varepsilon_{j}$.
\end{itemize}
Let $\xi=(\xi_{11},\dots,\xi_{1r_1},\dots,\xi_{n1},\dots,\xi_{nr_n})^T$, where $\xi_{ij}=\xi_i$.
Then, 
\begin{align*}
\erad(\cF) &= \EE_{\xi}\left(\sup_{f\in\cF}\frac{1}{N} \sum_{i=1}^n \xi_i\sum_{j=1}^{r_i} f(X_{ij})\right) \\
&= \EE_\xi \sup_{f\in \cF}\langle \xi, f\rangle \\
&= \EE_\xi \sup_{f\in \cF}\left(\langle \xi, f-f_m\rangle + \sum_{j=1}^m \langle \xi,f_j-f_{j-1}\rangle\right)\\
&\leq \varepsilon_m + \EE_\xi\sup_{f\in\cF}\sum_{j=1}^m \langle \xi,f_{j}-f_{j-1}\rangle\\
&\leq \varepsilon_m + \sum_{j=1}^m \EE_\xi\sup_{f\in\cF}\langle\xi, f_j - f_{j-1}\rangle\\
& = \varepsilon_m + \sum_{j=1}^m \EE_\xi\sup_{f_j\in \cF_j, f_{j-1}\in \cF_{j-1}}\langle\xi, f_j - f_{j-1}\rangle \\
&=\varepsilon_m + \sum_{j=1}^m \erad(\cF_j\cup\cF_{j-1}).
\end{align*}
Using the Massart lemma and the fact that $\sup_{f\in \cF_j,f'\in \cF_{j-1}}\|f_j-f_{j-1}\|\leq 3\varepsilon_{j}$,
\begin{align*}
\erad(\cF) &\leq \varepsilon_m + \sum_{j=1}^m \frac{3\varepsilon_{j}}{N}\sqrt{2\log(|\cF_j||\cF_{j-1}|)(\sum_{i=1}^nr_i^2)}\\
&\leq \varepsilon_m + \sum_{j=1}^m \frac{6\varepsilon_j}{N} \sqrt{\log|\cF_j|(\sum_{i=1}^nr_i^2)} \\
& = \varepsilon_m + \sum_{j=1}^m \frac{12(\varepsilon_{j}-\varepsilon_{j+1})}{N} \sqrt{\log \cN(\cF,L^2(\PP_n),\varepsilon_j)(\sum_{i=1}^nr_i^2)}.
\end{align*}
Taking $m\to\infty$, we obtain 
\[
\erad(\cF)\leq 12 \int_{0}^D  \frac{\sqrt{\log \cN(\cF,L^2(\PP_n),t)(\sum_{i=1}^nr_i^2)}}{N}\dd t.
\]
\end{proof}

\section{Additional simulation details for the multivariate setting}
To examine the type 1 error, we set $\iota=0.5$, $\beta=1$, and $\epsilon=1$ for both groups, and let $r_i\equiv2$ for group 1. 
For group 2, we consider $r_i\equiv r\in\{2,\dots,10\}$.
To assess the empirical power, we sample $r_i$ from $\{1,2,3\}$ equally for both groups. 
For group 1, we set $\iota=0.5$, $\beta=1$, and $\epsilon=1$.
For group 2, we vary $\iota$, $\beta$, and $\epsilon$ seperately, keeping other parameters fixed.
We compute the empirical power for $0\leq\iota\leq 1$, $-1\leq\beta\leq 3$, and $0.5\leq\epsilon\leq 1.5$, corresponding to within-subject variability, Fr{\'e}chet mean, and Fr{\'e}chet variance differences, respectively.
Figure \ref{fig:simuvec} yields results similar to the first setting, demonstrating the effectiveness of the proposed test $Q_n$.

\begin{figure}[!h]
	\centering
	\includegraphics[width=\textwidth]{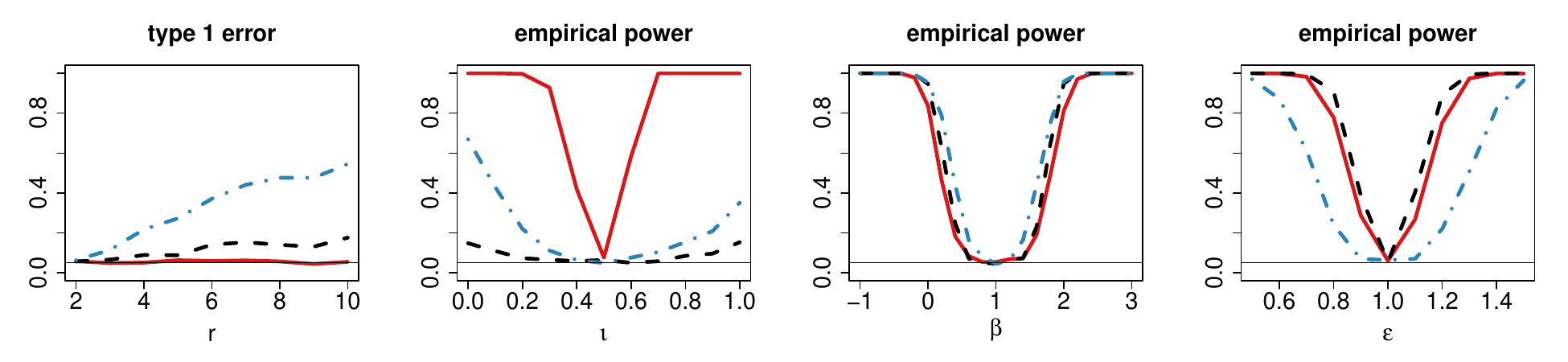}
	\caption{Empirical power for 5-dimensional vectors $(Y_{il}^{(v)})$ as a function of $r_i$, $\iota$, $\beta$, and $\epsilon$ (from left to right). The horizontal line indicates the 0.05 significance level. The solid red curve corresponds to the proposed test $Q_n$, and the dashed black and the dot-dashed blue curves represent $aF$ and $aS$, respectively.}\label{fig:simuvec}
\end{figure}

\bibliographystyle{chicago} 
\bibliography{repfre}       % Bibliography file (usually '*.bib')

\end{document}